\newtheorem{theorem}{Theorem}
\def\qed{\ifhmode\unskip\nobreak\fi\hfill
  \ifmmode\square\else$\square$\fi}
\newenvironment{proof}[1][]{\par \noindent {\bf Proof#1}.\ }{\hfill$\Box$
\par \vspace{11pt}}
\newcommand{\inst}[1]{$^{#1}$}
\DeclareSymbolFont{eulerletters}{U}{eur}{m}{n}%
\DeclareMathSymbol{\PowersetSym}{\mathord}{eulerletters}{"7D}%
\renewcommand{\epsilon}{\varepsilon}
\begin{document}

\date{}
\title{Computing Principal Components Dynamically }

\maketitle

\begin{center}
\author{\small Darko Dimitrov \inst{1}, \small Mathias Holst \inst{2}, \small Christian Knauer \inst{1}, \small Klaus Kriegel \inst{1}}

\vspace{1cm}

{\footnotesize
\inst{1} Institut f\"ur Informatik, Freie Universit\"{a}t Berlin,\\
         Takustra{\ss}e 9, D--14195 Berlin, Germany \\
         \texttt{\{darko,knauer,kriegel\}@inf.fu-berlin.de}\\
         
\inst{2} Institute of Computer Science, Universit{\"a}t Rostock,\\
         Albert Einstein Str. 21, D-18059 Rostock, Germany \\
         \texttt{mholst@informatik.uni-rostock.de}\\
} \end{center}

%
%
%
%

\begin{abstract}
In this paper we present closed-form solutions for efficiently updating the
principal components of a set of $n$ points, 
when $m$ points are added or deleted from the point set.
For both operations performed on a discrete point set in $\mathbb{R}^d$, 
we can compute the new principal components in $O(m)$ time for fixed $d$.
This is a significant improvement over the commonly used approach of recomputing
the principal components from scratch, which takes $O(n+m)$ time.
An important application of the above result is the dynamical computation
of bounding boxes based on principal component analysis. 
PCA bounding boxes are very often used
in many fields, among others in computer graphics for collision detection and fast rendering.
We have implemented and evaluated few algorithms for computing dynamically PCA bounding boxes in $\mathbb{R}^3$.
In addition, we present closed-form solutions for computing dynamically principal components
of continuous point sets in $\mathbb{R}^2$ and $\mathbb{R}^3$.
In both cases, discrete and continuous, to compute the new principal components,
no additional data structures or storage are needed.
\end{abstract}

%
%
%
%

\section{Introduction}\label{sec:intro}

{\it Principal component analysis} (PCA) \cite{j-pca-02} is probably the oldest and 
best known of the techniques of multivariate analysis.
The central idea and motivation of PCA 
is to reduce the dimensionality of a point set by identifying
{\it the most significant directions (principal components)}. 
Let $P=\{ \vec{p}_1, \vec{p}_2,\dots ,\vec{p}_n \}$ be a set of vectors (points)
in $\mathbb{R}^d$, and
$\vec{\mu}=(\mu_1, \mu_2,\dots ,\mu_d) \in \mathbb{R}^d$ be the center of gravity of $P$.
For $1 \leq k \leq d$, we use $p_{i,k}$ to denote the $k$-th coordinate of the
vector $p_i$.
Given two vectors $\vec{u}$ and $\vec{v}$, we use $\langle \vec{u},\vec{v} \rangle$ to denote 
their inner product.
For any unit vector $\vec{v} \in \mathbb{R}^d$, the {\it{variance of $P$ 
in direction $\vec{v}$}} is
\begin{equation}
{\mbox{var}}(P,\vec{v}) = \frac{1}{n}\sum_{i=1}^n \langle p_i-\vec{\mu} \, , \, \vec{v} \rangle ^2.
\end{equation}

The most significant direction corresponds to the unit vector $\vec{v}_1$ such 
that 
${\mbox{var}}(P, \vec{v}_1)$ is maximum. In general, after identifying the $j$ most 
significant directions $\vec{v}_1, \dots , \vec{v}_j$, the $(j+1)$-th
most significant direction corresponds to the unit vector $\vec{v}_{j+1}$ such that 
${\mbox{var}}(P, \vec{v}_{j+1})$ is maximum among all unit vectors
perpendicular to $\vec{v}_1, \vec{v}_2, \dots , \vec{v}_j$. 

It can be verified that for any unit vector $\vec{v} \in \mathbb{R}^d$,
\begin{equation} \label{eq:var-cov}
{\mbox{var}}(P, \vec{v}) = \langle \Sigma \, \vec{v}, \vec{v} \rangle ,
\end{equation}
where $\Sigma$ is the {\it covariance matrix} of $P$. 
$\Sigma$ is a symmetric $d \times d$ matrix
where the ($i,j$)-th component, $\sigma_{ij}, 1 \leq i,j \leq d$, is defined as
\begin{equation}
\sigma_{ij} = \frac{1}{n}\sum_{k=1}^n (p_{i,k} - \mu_i)(p_{j,k} - \mu_j).
\end{equation}

The procedure of finding the most significant directions, in the sense 
mentioned above, can be formulated as an eigenvalue problem. If 
$\lambda_1 \geq \lambda_2 \geq \cdots \geq \lambda_d$ are the 
eigenvalues of $\Sigma$, 
then the unit eigenvector $\vec{v}_j$ for $\lambda_j$ is the
$j$-th most significant direction. 
All $\lambda_j$s are 
non-negative and $\lambda_j = {\mbox{var}}(X, \vec{v}_j)$. Since the matrix $\Sigma$ is symmetric
positive definite, its eigenvectors are orthogonal. 

Computation of the eigenvalues, when $d$ is not very large,
can be done
in $O(d^3)$ time,  for example with the {\it Jacobi} or the {\it $QR$ method}
\cite{ptvf_nrc-95}.
For very large $d$, the problem of computing eigenvalues is non-trivial.
In practice, the above mentioned methods for computing eigenvalues converge rapidly.
In theory, it is unclear how to bound the running time combinatorially and how to compute the eigenvalues
in decreasing order.  In \cite{cww-pddupca-05} a modification of the {\it Power method} \cite{Parlett-98}
is presented, which can give a guaranteed approximation of the eigenvalues with high probability.

Examples of many applications of PCA include 
data compression, exploratory data analysis, visualization, image processing,
pattern and  image recognition, time series prediction, detecting perfect
and reflective symmetry, and dimension detection.
The thorough overview over PCA's applications can be found for example in the textbooks \cite{dhs-pc-01}
and \cite{j-pca-02}.
Most of the applications of PCA are non-geometric in their nature. However, there are also 
few purely geometric applications that are quite spread in computer graphics.
Example are the estimation of the undirected normals of the point sets or computing PCA bounding boxes
(bounding boxes determined by the principal components of the point set).

Dynamic versions of the above applications, i.e., when the point set (population) changes, 
are of big importance and interest.
In this paper we present closed-form solutions for efficiently updating the
principal components of a dynamic point set.
We also consider the computation of the dynamic PCA bounding boxes -
a very important application in many fields including computer graphics, where the PCA boxes are used to 
maintain hierarchical data structures for fast rendering of a scene or for collision detection. 

Based on the theoretical results in this paper, we have implemented several algorithms for computing 
PCA bounding boxes dynamically. 

The organization and the main results of the paper are as follows: 
In Section \ref{sec:PCsDynamicaly1} 
we present closed-form solutions for efficiently updating the
principal components of a set of $n$ points, 
when $m$ points are added or deleted from the point set. 
For both operations performed on a discrete point set in $\mathbb{R}^d$, 
we can compute the new principal components in $O(m)$ time for fixed $d$.
This is a significant improvement over the commonly used approach of recomputing
the principal components from scratch, which takes $O(n+m)$ time.
In Section \ref{sec:applications} we consider solutions
for the static and dynamic versions of the bounding box problem.
In Section~\ref{sec:experiments} we present and verify the correctness of the theoretical results presented in the
Chapter~\ref{sec:PCsDynamicaly1}. We have implemented several dynamic PCA bounding box algorithms and evaluated their performances.
Conclusion and open problems are presented in Section~\ref{sec:conclusion}.
In the appendix we consider the computation of the principal components of a dynamic  continuous point set.
We give closed form-solutions when the point set is a convex polytope or a boundary of a 
convex polytope in $\mathbb{R}^2$ or $\mathbb{R}^3$.
When the point set is a boundary of a convex polytope,
we can update the new principal components in $O(k)$ time, 
for both deletion and addition,
under the assumption that we know the $k$ facets in which the polytope changes.
Under the same assumption, 
when the point set is a convex polytope in $\mathbb{R}^2$ or $\mathbb{R}^3$, 
we can update the principal components in $O(k)$ time after adding points.
But, to update the principal components after deleting points from a convex polytope in $\mathbb{R}^2$ or $\mathbb{R}^3$
we need $O(n)$ time.
This is due to the fact that after a deletion
the center of gravity of the old convex hull (polyhedron)
could lie outside the new convex hull, and therefore, a retetrahedralization is needed 
(see Subsection~\ref{subsec:cpca-polytope} and Subsection~\ref{subsec:cpca-polygon} for details).

\section{Updating the principal components efficiently - 
         \\ discrete case in $\mathbb{R}^d$}\label{sec:PCsDynamicaly1}

In this subsection, we consider the problem of updating the covariance matrix 
$\Sigma$ of a discrete point set $P=\{ \vec{p}_1, \vec{p}_2,\dots ,\vec{p}_n \}$ in $\mathbb{R}^d$, when $m$ points are added or deleted from $P$.
We give closed-form solutions for computing the components of 
the new covariance matrix $\Sigma'$. Those closed-form solutions
are based on the already computed components of $\Sigma$.
The main result of this section is given in the following theorem.

\begin{theorem} \label{thm:discret-Rd}
Let $P$ be a set of $n$ points in $\mathbb{R}^d$ with known covariance matrix $\Sigma$.
Let $P'$ be a point set in $\mathbb{R}^d$, obtained by adding or deleting $m$ points from $P$.
The principal components of $P'$ can be computed in $O(m)$ time for fixed $d$. 
\end{theorem}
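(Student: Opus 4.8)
The plan is to reduce the computation of the new principal components to the problem of updating the covariance matrix $\Sigma'$ together with the new center of gravity $\vec{\mu}'$, and then observe that once $\Sigma'$ is available, its eigenvalues and eigenvectors can be found in $O(d^3)$ time, which is $O(1)$ for fixed $d$. So the heart of the argument is to show that $\vec{\mu}'$ and every entry $\sigma'_{ij}$ of $\Sigma'$ can be obtained from the already-known quantities ($\vec{\mu}$, the $\sigma_{ij}$, and $n$) plus the $m$ points being inserted or deleted, using only $O(m)$ arithmetic operations for fixed $d$.

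**First I would** record the elementary identity that lets us avoid re-centering. Expanding the definition of $\sigma_{ij}$, one has
\begin{equation}
\sigma_{ij} = \frac{1}{n}\sum_{k=1}^{n} p_{i,k}\,p_{j,k} \;-\; \mu_i \mu_j,
\end{equation}
so it is natural to carry along the ``raw'' second-moment sums $S_{ij} := \sum_{k=1}^{n} p_{i,k} p_{j,k}$ and the raw first-moment sums $T_i := \sum_{k=1}^n p_{i,k} = n\mu_i$; note these are recoverable from the given data since $S_{ij} = n(\sigma_{ij} + \mu_i\mu_j)$ and $T_i = n\mu_i$, so no extra stored data structure is needed. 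For an addition of points $\vec{q}_1,\dots,\vec{q}_m$, the new sums are $T_i' = T_i + \sum_{\ell=1}^m q_{\ell,i}$ and $S_{ij}' = S_{ij} + \sum_{\ell=1}^m q_{\ell,i} q_{\ell,j}$, each a sum of $O(m)$ terms; for a deletion of a designated sub-multiset of size $m$, the same formulas hold with the signs reversed. Then $n' = n \pm m$, $\mu_i' = T_i'/n'$, and finally $\sigma_{ij}' = S_{ij}'/n' - \mu_i'\mu_j'$. Since there are $d$ values $T_i$ and $O(d^2)$ values $S_{ij}$, and updating each costs $O(m)$, the whole update of $\Sigma'$ and $\vec{\mu}'$ takes $O(d^2 m) = O(m)$ time for fixed $d$.

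**The last step** is to diagonalize $\Sigma'$: as noted in the introduction, for fixed $d$ the Jacobi or $QR$ method computes all eigenvalues and an orthonormal eigenbasis in $O(d^3) = O(1)$ time, and sorting the $d$ eigenvalues to obtain the principal components in decreasing order of variance is also $O(1)$. Combining the two phases gives total time $O(m) + O(1) = O(m)$, proving the theorem.

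**The main obstacle** I anticipate is not the additive bookkeeping but the deletion case: it is only legitimate to ``subtract off'' the $m$ points if they are genuinely a sub-multiset of $P$, and one should be slightly careful that $n' = n - m \geq 1$ so that $\mu_i'$ and $\sigma_{ij}'$ are well defined (if $n' = 0$ the point set is empty and PCA is vacuous; if all remaining points coincide, $\Sigma'$ is the zero matrix and the principal components are any orthonormal basis). A secondary point worth stating explicitly, to justify the ``no additional data structures'' claim, is that we never needed to have stored the $S_{ij}$ or $T_i$ separately — they are reconstructed on the fly from $\Sigma$, $\vec{\mu}$, and $n$ at the start of each update — so the only inputs are exactly what the theorem hypothesizes plus the $m$ points involved in the operation.
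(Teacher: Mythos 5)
Your proof is correct, but it takes a genuinely different route from the paper's. You reconstruct the raw first and second moments $T_i = n\mu_i$ and $S_{ij} = n(\sigma_{ij} + \mu_i\mu_j)$ from the stored quantities, update them additively in $O(m)$ time, and then recover $\mu'_i = T'_i/n'$ and $\sigma'_{ij} = S'_{ij}/n' - \mu'_i\mu'_j$. The paper never passes through raw moments: it splits $\sigma'_{ij}$ into a sum over the old $n$ points and a sum over the $m$ changed points, re-centers each sub-sum at its own mean (so that the linear cross-terms $\sum_k(p_{k,i}-\mu_i)$ and $\sum_k(p_{k,i}-\mu^m_i)$ vanish), and arrives at the closed-form merge identity
$\sigma'_{ij} = \frac{1}{n+m}\bigl(n\,\sigma_{ij} + m\,\sigma^m_{ij}\bigr) + \frac{nm}{(n+m)^2}(\mu_i-\mu^m_i)(\mu_j-\mu^m_j)$,
with the analogous subtraction formula for deletions. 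Both deliver the same $O(d^2 m + d^3)=O(m)$ bound for fixed $d$, and both correctly use only $\Sigma$, $\vec\mu$, $n$ and the $m$ affected points. The difference is one of form and downstream use: your raw-moment argument is shorter and more elementary, while the paper's identity is phrased directly as a combination of two constituent covariance matrices and their centroids, which is what lets the paper later merge two precomputed objects in $O(1)$ time without touching their points. One practical caveat worth noting (orthogonal to the asymptotic claim, which both proofs establish): the formula $\sigma_{ij} = \frac{1}{n}\sum_k p_{k,i}p_{k,j} - \mu_i\mu_j$ underlying your approach is the textbook example of catastrophic cancellation when the data is far from the origin, whereas the paper's centered merge formula is the numerically stable pairwise update; so if one were to implement this, the paper's route is preferable even though the two are interchangeable in exact arithmetic.
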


\begin{proof}

\bigskip
\noindent
{\bf{Adding points}}
\bigskip

Let $P_m=\{\vec{p}_{n+1}, \vec{p}_{n+2}, \dots, \vec{p}_{n+m}\}$ be a point set with
center of gravity $\vec{\mu}^m = (\mu_1^m, \mu_2^m, \dots, \mu_d^m)$.
We add $P_m$ to $P$ obtaining new point 
set $P'$.
The $j$-th component, $\mu_j'$, $1 \leq j \leq d$, of the center of gravity 
$\vec{\mu}' = (\mu_1', \mu_2', \dots, \mu_d')$ of $P'$ is

$$
\begin{array}{lll}
\mu_j'& = & \frac{1}{n+m} \sum_{k=1}^{n+m} p_{k,j} \vspace{0.3cm} \\ 
      & = & \frac{1}{n+m} \left(\sum_{k=1}^{n}p_{k,j} + \sum_{k=n+1}^{n+m}p_{k,j}\right) \vspace{0.3cm} \\
      & = & \frac{n}{n+m}\mu_j + \frac{m}{n+m}\mu_j^m.
\end{array}
$$

The $(i,j)$-th component, $\sigma_{ij}'$, $1 \leq i, j \leq d$, of the covariance matrix 
$\Sigma'$ of $P'$ is
$$
\begin{array}{lll}
\sigma_{ij}'& = & \frac{1}{n+m} \sum_{k=1}^{n+m} (p_{k,i} - \mu_i')(p_{k,j} - \mu_j')  \vspace{0.3cm}\\
      & = & \frac{1}{n+m} \sum_{k=1}^{n} (p_{k,i} - \mu_i')(p_{k,j} - \mu_j') 
            + \frac{1}{n+m} \sum_{k=n+1}^{n+m} (p_{k,i} - \mu_i')(p_{k,j} - \mu_j'). \\
\end{array}
$$

Let

$$
\sigma_{ij}' = \sigma_{ij,1}' + \sigma_{ij,2}',
$$

where,
\begin{equation}\label{eq:add30}
\sigma_{ij,1}' = \frac{1}{n+m} \sum_{k=1}^{n} (p_{k,i} - \mu_i')(p_{k,j} - \mu_j'), 
\end{equation}

and

\begin{equation}\label{eq:add40}
\sigma_{ij,2}' = \frac{1}{n+m} \sum_{k=n+1}^{n+m} (p_{k,i} - \mu_i')(p_{k,j} - \mu_j').
\end{equation}

Plugging-in the values of $\mu_i'$ and $\mu_j'$ in (\ref{eq:add30}), we obtain:
$$
\begin{array}{lll}
\sigma_{ij,1}' & = & \frac{1}{n+m}  \sum_{k=1}^{n} (p_{k,i} - \frac{n}{n+m}\mu_i - \frac{m}{n+m}\mu_i^m)
                                            (p_{k,j} - \frac{n}{n+m}\mu_j - \frac{m}{n+m}\mu_j^m) \vspace{0.3cm} \\
        & = & \frac{1}{n+m}  \sum_{k=1}^{m} (p_{k,i} - \mu_i +\frac{m}{n+m}\mu_i - \frac{m}{n+m}\mu_i^m)
                                            (p_{k,j} - \mu_j+\frac{m}{n+m}\mu_j - \frac{m}{n+m}\mu_j^m) \vspace{0.3cm} \\
            & = & \frac{1}{n+m} \sum_{k=1}^{n}(p_{k,i} - \mu_i)(p_{k,j} - \mu_j)+
            \frac{1}{n+m}\sum_{k=1}^{n}(p_{k,i} - \mu_i)(\frac{m}{n+m}\mu_j - \frac{m}{n+m}\mu_j^m) + \vspace{0.3cm} \\
            &  &  \frac{1}{n+m}\sum_{k=1}^{n}(\frac{m}{n+m}\mu_i - \frac{m}{n+m}\mu_i^m)(p_{k,j} - \mu_j) + \vspace{0.3cm} \\
            &  & \frac{1}{n+m}\sum_{k=1}^{n} (\frac{m}{n+m}\mu_i - \frac{m}{n+m}\mu_i^m)(\frac{m}{n+m}\mu_j -      \frac{m}{n+m}\mu_j^m)\vspace{0.3cm}. \\
\end{array}
$$

Since $\sum_{k=1}^{n}(p_{k,i} - \mu_i)=0$, $1 \leq i \leq d$, we have

\begin{equation}\label{eq:add60}
\begin{array}{lll}
\sigma_{ij,1}'
            & = & \frac{n}{n+m} \sigma_{ij} +
            \frac{n m^2}{(n+m)^3}(\mu_i - \mu_i^m )(\mu_j - \mu_j^m).
\end{array}
\end{equation}

Plugging-in the values of $\mu_i'$ and $\mu_j'$ in (\ref{eq:add40}), we obtain:

$$
\begin{array}{lll}
\sigma_{ij,2}'
        & = & \frac{1}{n+m}  \sum_{k=n+1}^{n+m} (p_{k,i} - \frac{n}{n+m}\mu_i - \frac{m}{n+m}\mu_i^m)
              (p_{k,j} - \frac{n}{n+m}\mu_j - \frac{m}{n+m}\mu_j^m) \vspace{0.3cm} \\
        & = & \frac{1}{n+m}  \sum_{k=n+1}^{n+m} (p_{k,i} - \mu_i^m +\frac{n}{n+m}\mu_i^m - \frac{n}{n+m}\mu_i)
              (p_{k,j} - \mu_j^m +\frac{n}{n+m}\mu_j^m - \frac{n}{n+m}\mu_j ) \vspace{0.3cm} \\
        & = & \frac{1}{n+m} \sum_{k=n+1}^{n+m}(p_{k,i} - \mu_i^m)(p_{k,j} - \mu_j^m)+
            \frac{1}{n+m}\sum_{k=n+1}^{n+m}(p_{k,i} - \mu_i^m) \frac{n}{n+m}(\mu_j^m - \mu_j) + \vspace{0.3cm} \\
            &  &  \frac{1}{n+m}\sum_{k=n+1}^{n+m}\frac{n}{n+m}(\mu_i^m-\mu_i)(p_{k,j} - \mu_j^m) + 
             \frac{1}{n+m}\sum_{k=n+1}^{n+m} \frac{n}{n+m}(\mu_i^m-\mu_i) \frac{n}{n+m}(\mu_j^m-\mu_j).
             \end{array}
$$

Since $\sum_{k=n+1}^{n+m}(p_{k,i} - \mu_i^m)=0$, $1 \leq i \leq d$, we have

\begin{equation}\label{eq:add70}
\begin{array}{lll}
\sigma_{ij,2}' & = & \frac{m}{n+m} \sigma_{ij}^m+
             \frac{n^2 m}{(n+m)^3}(\mu_i-\mu_i^m)(\mu_j-\mu_j^m), \vspace{0.3cm}
\end{array}
\end{equation}
where

$$\sigma_{ij}^m = \frac{1}{m} \sum_{k=n+1}^{n+m}(p_{k,i} - \mu_i^m)(p_{k,j}- \mu_j^m)),
\quad 1 \leq i,j \leq d,$$ is the $i,j$-th element of the covariance matrix 
$\Sigma_m$ of the point set $P_m$.

Finally, we have

\begin{equation}\label{eq:add80}
\sigma_{ij}' = \sigma_{ij,1}' + \sigma_{ij,2}' =\frac{1}{n+m}(n \sigma_{ij} + m \sigma_{ij}^m) +
\frac{n m}{(n+m)^2}(\mu_i-\mu_i^m)(\mu_j-\mu_j^m).
\end{equation}

Note that $\sigma_{ij}^m$, and therefore $\sigma_{ij}'$, can be computed in $O(m)$ time.
Thus, for a fixed dimension $d$, the covariance matrix $\Sigma$ also can be computed in $O(m)$ time.

\bigskip
\noindent
{\bf{Deleting points}}
\bigskip

Let $P_m=\{\vec{p}_{n-m+1}, \vec{p}_{n-m}, \dots, \vec{p}_{n}\}$ be a subset of the point set P,
and let $\vec{\mu}^m = (\mu_1^m, \mu_2^m, \dots, \mu_d^m)$ be the center of gravity of $P_m$.
We subtract $P_m$ from $P$, obtaining new point set $P'$.
The $j$-th component, $\mu_j'$, $1 \leq j \leq d$, of the center of gravity 
$\vec{\mu}' = (\mu_1', \mu_2', \dots, \mu_d')$ of $P'$ is
$$
\begin{array}{lll}
\mu_j'& = & \frac{1}{n-m} \sum_{k=1}^{n-m} p_{k,j} \vspace{0.3cm} \\ 
      & = & \frac{1}{n-m} \left(\sum_{k=1}^{n}p_{k,j} - \sum_{k=n-m+1}^{n}p_{k,j}\right) \vspace{0.3cm} \\
      & = & \frac{n}{n-m}\mu_j - \frac{m}{n-m}\mu_j^m.
\end{array}
$$

The $(i,j)$-th component, $\sigma_{ij}'$, $1 \leq i, j \leq d$, of the covariance matrix 
$\Sigma'$ of $P'$ is
$$
\begin{array}{lll}
\sigma_{ij}'& = & \frac{1}{n-m} \sum_{k=1}^{n-m} (p_{k,i} - \mu_i')(p_{k,j} - \mu_j')  \vspace{0.3cm}\\
      & = & \frac{1}{n-m} \sum_{k=1}^{n} (p_{k,i} - \mu_i')(p_{k,j} - \mu_j') 
            - \frac{1}{n-m} \sum_{k=n-m+1}^{n} (p_{k,i} - \mu_i')(p_{k,j} - \mu_j'). \\
\end{array}
$$

Let

$$
\sigma_{ij}' = \sigma_{ij,1}' - \sigma_{ij,2}',
$$

where,
\begin{equation}\label{eq:delete30}
\sigma_{ij,1}' = \frac{1}{n-m} \sum_{k=1}^{n} (p_{k,i} - \mu_i')(p_{k,j} - \mu_j'), 
\end{equation}

and

\begin{equation}\label{eq:delete40}
\sigma_{ij,2}' = \frac{1}{n-m} \sum_{k=n-m+1}^{n} (p_{k,i} - \mu_i')(p_{k,j} - \mu_j').
\end{equation}

Plugging-in the values of $\mu_i'$ and $\mu_j'$ in (\ref{eq:delete30}), we obtain:

$$
\begin{array}{lll}
\sigma_{ij,1}' & = & \frac{1}{n-m}  \sum_{k=1}^{n} (p_{k,i} - \frac{n}{n-m}\mu_i + \frac{m}{n-m}\mu_i^m)
                                            (p_{k,j} - \frac{n}{n-m}\mu_j + \frac{m}{n-m}\mu_j^m) \vspace{0.3cm} \\
        & = & \frac{1}{n-m}  \sum_{k=1}^{m} (p_{k,i} - \mu_i +\frac{m}{n-m}\mu_i - \frac{m}{n-m}\mu_i^m)
                                            (p_{k,j} - \mu_j+\frac{m}{n-m}\mu_j - \frac{m}{n-m}\mu_j^m) \vspace{0.3cm} \\
            & = & \frac{1}{n-m} \sum_{k=1}^{n}(p_{k,i} - \mu_i)(p_{k,j} - \mu_j)+
            \frac{1}{n-m}\sum_{k=1}^{n}(p_{k,i} - \mu_i)(\frac{m}{n-m}\mu_j - \frac{m}{n-m}\mu_j^m) + \vspace{0.3cm} \\
            &  &  \frac{1}{n-m}\sum_{k=1}^{n}(\frac{m}{n-m}\mu_i - \frac{m}{n-m}\mu_i^m)(p_{k,j} - \mu_j) + \vspace{0.3cm} \\
            &  & \frac{1}{n-m}\sum_{k=1}^{n} (\frac{m}{n-m}\mu_i - \frac{m}{n-m}\mu_i^m)(\frac{m}{n-m}\mu_j - \frac{m}{n-m}\mu_i^m). \vspace{0.3cm} \\
\end{array}
$$

Since $\sum_{k=1}^{n}(p_{k,i} - \mu_i)=0$, $1 \leq i \leq d$, we have

\begin{equation}\label{eq:delete50}
\begin{array}{lll}
\sigma_{ij,1}'
            & = & \frac{n}{n-m} \sigma_{ij} +
            \frac{n m^2}{(n-m)^3}(\mu_i - \mu_i^m)(\mu_j - \mu_j^m).
\end{array}
\end{equation}

Plugging-in the values of $\mu_i'$ and $\mu_j'$ in (\ref{eq:delete40}), we obtain:

$$
\begin{array}{lll}
\sigma_{ij,2}'
        & = & \frac{1}{n-m}  \sum_{k=n-m+1}^{n} (p_{k,i} - \frac{n}{n-m}\mu_i + \frac{m}{n-m}\mu_i^m)
              (p_{k,j} - \frac{n}{n-m}\mu_j + \frac{m}{n-m}\mu_j^m) \vspace{0.3cm} \\
        & = & \frac{1}{n-m}  \sum_{k=n-m+1}^{n} (p_{k,i} - \mu_i^m +\frac{n}{n-m}\mu_i^m - \frac{n}{n-m}\mu_i)
              (p_{k,j} - \mu_j^m +\frac{n}{n-m}\mu_j^m - \frac{n}{n-m}\mu_j ) \vspace{0.3cm} \\
        & = & \frac{1}{n-m} \sum_{k=n-m+1}^{n}(p_{k,i} - \mu_i^m)(p_{k,j} - \mu_j^m)+
            \frac{1}{n-m}\sum_{k=n-m+1}^{n-m}(p_{k,i} - \mu_i^m) \frac{n}{n-m}(\mu_j^m - \mu_j) + \vspace{0.3cm} \\
            &  &  \frac{1}{n-m}\sum_{k=n-m+1}^{n}\frac{n}{n-m}(\mu_i^m-\mu_i)(p_{k,j} - \mu_j^m) + 
             \frac{1}{n-m}\sum_{k=n-m+1}^{n} \frac{n}{n-m}(\mu_i^m-\mu_i) \frac{n}{n-m}(\mu_j^m-\mu_j).
\end{array}
$$

Since $\sum_{k=n-m+1}^{n}(p_{k,i} - \mu_i^m)=0$, $1 \leq i \leq d$, we have

\begin{equation}\label{eq:delete60}
\begin{array}{lll}
\sigma_{ij,2}' & = & \frac{n}{n-m} \sigma_{ij}^m +
             \frac{n^2 m}{(n-m)^3}(\mu_i-\mu_i^m)(\mu_j-\mu_j^m), \vspace{0.3cm}
\end{array}
\end{equation}
where

$$\sigma_{ij}^m = \frac{1}{m} \sum_{k=n-m+1}^{n}(p_{k,i} - \mu_i^m)(p_{k,j}- \mu_j^m)),
\quad 1 \leq i,j \leq d,$$ is the $i,j$-th element of the covariance matrix 
$\Sigma_m$ of the point set $P_m$.

Finally, we have

\begin{equation}\label{eq:delete70}
\sigma_{ij}' = \sigma_{ij,1}' + \sigma_{ij,2}' =\frac{1}{n-m}(n \sigma_{ij} - m \sigma_{ij}^m) -
\frac{n m}{(n-m)^2}(\mu_i-\mu_i^m)(\mu_j-\mu_j^m).
\end{equation}

Note that $\sigma_{ij}^m$, and therefore $\sigma_{ij}'$, can be computed in $O(m)$ time.
Thus, for a fixed dimension $d$, the covariance matrix $\Sigma$ also can be computed in $O(m)$ time.
\end{proof}

As a corollary of (\ref{eq:add80}), in the case when only one point, $\vec{p}_{n+1}$, is added to a point set $P$, 
the elements of the new covariance
matrix are given by
\begin{equation}\label{eq:add85}
\sigma_{ij}' = \sigma_{ij,1}' + \sigma_{ij,2}' =\frac{n}{n+1} \sigma_{ij} + \frac{n}{(n+1)^2} (p_{n+1,i}-\mu_i)(p_{n+1,j}-\mu_j),
\end{equation}
and can be computed in $O(1)$ time.

Similarly, as a corollary of (\ref{eq:delete70}),
in the case when only one point, $\vec{p}_{e}$, is deleted from a point set $P$, the elements of the new covariance
matrix are given by

\begin{equation}\label{eq:delete80}
\sigma_{ij}' = \sigma_{ij,1}' - \sigma_{ij,2}' =\frac{m}{m-1} \sigma_{ij} - \frac{m}{(m-1)^2} ( p_{e,i} - \mu_i)(p_{e,j} - \mu_j),
\end{equation}
and also can be computed in $O(1)$ time.

The principal components of discrete point sets can be strongly influenced by point clusters \cite{dkkr-bqpcabb-09}.
To avoid the influence of the distribution of the point set, often continuous sets, especially the convex hull of a point set is considered,
which lead to so-called continuous PCA.
Computing PCA bounding boxes \cite{glm-obbtree-96}, \cite{dhkk-cfscpcabba-09}, or retrieval of {3D}-objects \cite{vsr-fmbb-01}, are typical applications where continuous  PCA are of interest.
Due to better readability and compactness of the paper, we present  the closed-form solutions for dynamic version of
continuous PCA in the appendix. There, we consider cases when the point set is a convex polytope or the boundary of a convex polytope 
in $\mathbb{R}^2$ and $\mathbb{R}^3$.

\section{An application - computing PCA bounding boxes}\label{sec:applications}

PCA is a well-established technique for dimensionality reduction and
multivariate analysis, with numerous applications in both static and dynamic context.
Examples of many applications of PCA include 
data compression, exploratory data analysis, visualization, image processing,
pattern and  image recognition, time series prediction, detecting perfect
and reflective symmetry, and dimension detection (see textbooks \cite{dhs-pc-01}
and \cite{j-pca-02} for thorough overview over PCA's applications).

In the rest of this section, we consider solutions
for the static and dynamic versions of the bounding box problem.

\subsection{Computing bounding boxes - static version}
Substituting sets of points or complex geometric shapes  with their 
bounding boxes is motivated by many applications. For example,
in computer graphics, it is used to maintain hierarchical data structures for 
fast rendering of a scene or for collision detection. 
Additional applications include those in shape analysis and shape simplification, 
or in statistics, for storing and performing range-search queries on a large
database of samples. 

Computing a minimum-area bounding rectangle 
of a set of $n$ points in $\mathbb{R}^2$ can be done in $O(n \log n)$ 
time, for example with the  rotating calipers algorithm 
\cite{t-sgprc-83}.
O'Rourke \cite{o-fmbb-85} presented a deterministic algorithm,
a rotating calipers variant in $\mathbb{R}^3$,  
for computing the minimum-volume
bounding box of a set of $n$ points in $\mathbb{R}^3$. 
His algorithm requires 
$O(n^3)$ time and $O(n)$ space.
Barequet and Har-Peled \cite{bh-eamvb-01} have contributed two algorithms with nearly linear complexity,
based on a core-set approach, 
that compute (1+$\epsilon$)-approximations of the minimum-volume bounding box of point sets in
$\mathbb{R}^3$.
The running times of their algorithms are $O(n + 1/{\epsilon}^{4.5})$ and 
$O(n \log n + n/{\epsilon}^3)$, respectively.
A further improvement to $O(n + 1/{\epsilon}^{3})$ running time can be obtained
by using a coreset of size $O(1/{\epsilon})$ by Agarwal, Har-Peled, and Varadarajan \cite{ahv-dhsrrp-04},
and Chan \cite{c-fccdaafd-06}.

Numerous heuristics have been proposed for computing a box that encloses a
given set of points. The simplest heuristic is naturally to compute the
axis-aligned bounding box of the point set. Two-dimensional variants of this
heuristic include the well-known \emph{R-tree}, the \emph{packed R-tree}
\cite{rl-dsspduprt-85}, the \emph{$R^*$-tree} \cite{bkss-R*-tree}, 
the \emph{$R^+$-tree} \cite{srf-r+tree}, etc.

A frequently used heuristic for computing a bounding box of 
a set of points is based on PCA.
The principal components of the point set define the axes of 
the bounding box. Once the directions of the axes are given,
the dimension of the bounding box is easily found by the extreme values of the
projection of the points on the corresponding axis.

\begin{figure}[h]
  \centering
  \includegraphics[scale=0.45]{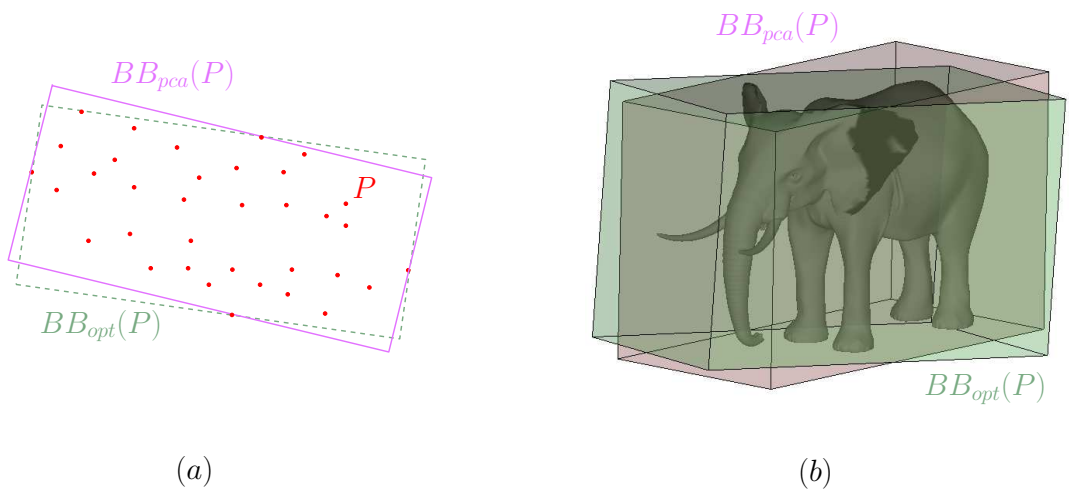}
  \caption{The minimum-area(volume) bounding box
  and the  PCA bounding box of a point set $P$, (a) in  $\mathbb{R}^2$,
  and (b) in $\mathbb{R}^3$.}
  \label{fig:pca-opt-box}
\end{figure}

Two distinguished applications of this heuristic are
the OBB-tree \cite{glm-obbtree-96} and the BOXTREE \cite{bcgmt-boxtree-96},
hierarchical bounding box structures, that support efficient collision detection
and ray tracing.
Computing a bounding box of a set of points in $\mathbb{R}^2$ 
and $\mathbb{R}^3$ by PCA is simple and 
requires linear 
time. To avoid the influence of the distribution of 
the point set 
on the directions of the PCs, a possible approach 
is to consider the convex hull, or 
the boundary of the convex hull $CH(P)$ of the point set $P$. 
Thus, the complexity of the algorithm increases to $O(n \log n)$. 
The popularity of this heuristic, besides its speed,
lies in its easy implementation and in the fact that usually 
PCA bounding boxes are tight-fitting,
see Figure~\ref{fig:pca-opt-box} for an illustration.
Experimental results of the quality of the PCA bounding boxes can be found in \cite{dhkk-cfscpcabba-09},\cite{lkmkbz-obbtdtpb-00},
and theoretical results in \cite{dkkr-bqpcabb-09}.

\subsection{Computing bounding boxes - dynamic version}

Dynamic $(1+\epsilon)$-approximation bounding box can be efficiently solved by the dynamic data structure \cite{c-dc-09} that can maintain an $\epsilon$-coreset of $n$ points in $O(\log n)$ time for any constant $\epsilon > 0$ and any constant dimension.
That is an improvement of the previous method by Agarwal, Har-Peled, and Varadarajan \cite{ahv-dhsrrp-04} that requires polylogarithmic update time. However, both results are more of theoretical importance, 
since there realization involves sophisticated data structures difficult for implementation.

Computing PCA bounding boxes of a point set consists of two steps: $1)$ computing the
principal components, that define the axes of the bounding box, and 2)
computing the extremal point along the axes, that determine the size of
the bounding box.
In Section~\ref{sec:PCsDynamicaly1}, we have presented closed-form solution for efficient update
of the principal components when we add or delete several points from the point set.
In sequel, we consider step $2$.

\subsubsection{Computing extremal points}

A trivial way to compute the extremal points is to scan all points, which takes linear time.

Faster algorithms that compute extremal points dynamically are related to computing and maintaining the 
convex hull of the point set dynamically. Then, one can perform extreme point queries in polylogarithmic time.
Here, we give an overview of the known results of dynamical computation of convex hull and some of related operation
on it in $\mathbb{R}^2$ and $\mathbb{R}^3$.

Brodal and Jacob \cite{bj-dpch-02} 
present a data structure that maintains a finite set of $n$ points in the plane under point insertions  and point deletions in amortized $O(\log n)$ time per operation. This data structure requires $O(n$) space, and supports extreme point queries in a given direction, tangent queries through a given point, and queries for the neighboring points on the convex hull in $O(\log n)$ time. 
T.~Chan \cite{c-dds3dch2dnnq-06} presents a fully dynamic randomized data structure that can answer queries about the convex hull of a set of $n$ points in three dimensions, where insertions take $O(\log^3 n)$ expected amortized time, deletions take $O(\log^6 n)$ expected amortized time, and extreme-point queries take $O(\log^2 n)$ worst-case time. This is the first method that guarantees polylogarithmic update and query cost for arbitrary sequences of insertions and deletions, and improves the previous $O(n^\epsilon)$-time method by Agarwal and Matou{\v s}ek~\cite{am-dhsrrp-95}. 

There are two disadvantages in the above approaches. First, they require a computation of the convex hull, 
which increases the complexity of the algorithms to $O(n \log n)$. However, the convex hull computation
and building corresponding data structures can be done in preprocessing, which is not critical for many applications.
Second, the above  date structures for dynamic convex hull computation are of theoretical importance, they 
are quite difficult for implementation, and to best of our knowledge, they have never been implemented.
Therefore, in the next section, we consider two simple approaches for computing extremal points, one
is the already mentioned linear scan of all points, and the other is a grid approach, refined with several variants.

\section{Practical variants of dynamical PCA bounding boxes and experimental results}\label{sec:experiments}

The main focus in this section is to show the advantages of the theoretical results presented
in this paper in the context of computing dynamic PCA bounding boxes.
We present three practical simple algorithms, and compare their performances.
A thorough comparison study of different variants of statical PCA bounding boxes
the interested reader could find in \cite{dhkk-cfscpcabba-09}.
The algorithms were implemented in C\#, C++ and OpenGL, and tested on a Core Duo 2.33GHz 
with 2GB memory. 
The principal components of all algorithms are computed with the closed-form solutions from Section~\ref{sec:PCsDynamicaly1}.
They differ only how the extremal points along the principal components are found.
The implemented algorithms are the following:

\begin{itemize}
\item  {\bf PCA-AP} (PCA-all-points) - finds the extremal points by going through all points.
\end{itemize}

\begin{itemize}
\item {\bf PCA-AGP} (PCA-all-grid-points) -the space is discretized by a regular three dimensional axis-aligned grid, 
with a cube of size $\epsilon \times \epsilon \times \epsilon$ as primer component.
See Figure~\ref{fig:lion} for an illustration. The grid size is chosen relatively to the size of the object. Each object is scaled such that its diameter is $1$.
The values of $\epsilon$ are between $0.001$ and $1$.
The corners of non-empty cells are considered to find the extremal points along the principal directions.
\begin{figure}[h]
  \centering
  \includegraphics[scale=0.45]{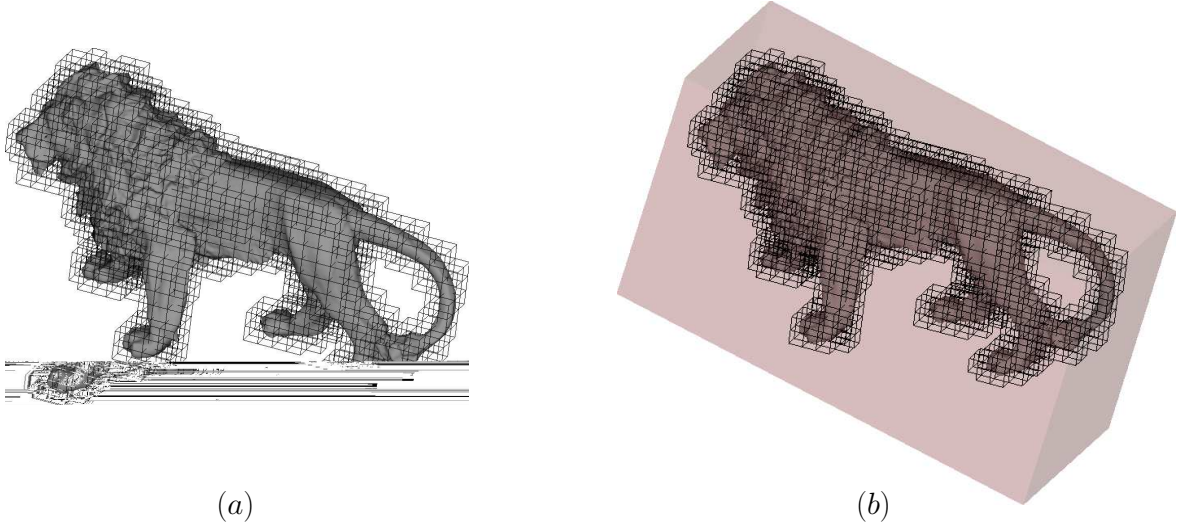}
  \caption{(a) A real world object and its corresponding grid for $\epsilon=0.03$.
  Only the non-empty grids are visualized.
  (b) The bounding box of the object obtained by the PCA-AGP algorithm.}
  \label{fig:lion}
\end{figure}
\item {\bf PCA-EGP} (PCA-extremal-grid-points) - this is an improvement of the PCA-AGP algorithm. To each vertical grid line, i.e., orthogonal to $XY$ plane, two extremal corners of the non-empty cell are computed. Thus, we reduced the candidates for extremal points from $O(\frac{1}{\epsilon^3})$ to $O(\frac{1}{\epsilon^2})$.
\end{itemize}

We further reduce the number of points considered in the PCA-AGP and PCA-EGP algorithms by replacing the cell corners with 
the centers of gravity of the cells. 
Afterwords, we expand the resulting box by $\sqrt{3} \epsilon / 2$ to ensure that
the box contains all original points.
We have implemented also these variants, 
but, since for a reasonable big grid size ($\epsilon \geq 0.01$) the running time improvements are negligible, 
we report here only the results of the base variants of the algorithms PCA-AGP and PCA-EGP. However, for very dense grid the improved version of the both algorithms
give better results.

In the following experiments, we add (delete) random points from the point set, and compare the results of a dynamical versions of PCA bounding boxes with the 
their corresponding statical versions (when the covariance matrix of the point set is computed from scratch).
The time of computing, the volume of a bounding box, and the grid density are parameters of interest in this evaluation study.
The test were performed on big number of real graphics models taken from various publicly available sources
(Stanford 3D scanning repository, 3D Cafe). Typical samples of the results are given in 
Table~\ref{table:PCABB-real-lion-add}, Table~\ref{table:PCABB-real-lion-vol}, and Table~\ref{table:PCABB-real-lion-eps-vol}.

\begin{table*}[width=\columnwidth]
\caption{Time needed by the PCA bounding box algorithms for the lion model (183408~points). The values in
the table are the average of results of 100 runs of the algorithms, each time adding/deleting the corresponding number of points.}
\label{table:PCABB-real-lion-add} \centering
\begin{tabular}{|l|c|c|c|c|c|c|}
\hline
\multicolumn{7}{|c|}{Adding/deleting points, $\epsilon =0.005$} \\
\hline
 & 1pnt & 1pnt& 100 pnts & 100 pnts& 1000 pnts& 1000 pnts \\
\hline
 algorithm & static & dynamic& static& dynamic& static& dynamic \\
\hline
PCA-AP      & 0.166 s    & 0.014 s   &  0.171 s &  0.015 s  &  0.172 s  & 0.016 s \\
\hline
PCA-AGP   &  0.092 s &   0.0095 s &   0.093 s&   0.0085 s   &  0.99 s  & 0.017 s\\
\hline
PCA-EGP   &   0.0805 s  & 0.0055 s &  0.082 s &  0.006 s &   0.092 s  &  0.0135 s \\
\hline
\end{tabular}
\end{table*}

The main conclusions of the experiments are as follows:

\begin{itemize}
\item As expected from the theoretical results, the dynamic versions of the algorithms
are significantly faster than
their static counterparts. Typically, the dynamic
versions are about an oder of magnitude faster (see Table~\ref{table:PCABB-real-lion-add}).
\item The dynamic PCA-AP algorithm is not only significantly faster than its statical version, it is
also faster than the static version of the PCA-AGP and PCA-EGP algorithms.
This is due to fact that the brute force manner of finding the extremal points is faster
than computing the covariance matrix of the new point set from scratch, although both
algorithms require $O(n)$ time in the asymptotic analysis.

\item 
Clearly,  the PCA-AGP and PCA-EGP algorithms, that exploit the grid subdivision structure,
are faster than the PCA-AP algorithm. 
The price that must be paid for this is two-folded. 
First, an extra preprocessing time for building the grid is needed.
For the example considered in Table~\ref{table:PCABB-real-lion-add},
computing the grid takes about $0.4$ seconds for the PCA-AGP algorithm, and about $0.43$
for the PCA-EGP algorithm.
Second, the resulting bounding boxes are less precise (see Table~\ref{table:PCABB-real-lion-vol}).

\item  As it is shown in Table~\ref{table:PCABB-real-lion-eps-vol}, for grids that are not very sparse
($\epsilon \leq 0.03$), the approximated PCA bounding boxes computed by 
the PCA-AGP and PCA-EGP algorithms are quite close to the exact PCA bounding boxes.

\end{itemize}

\begin{table*}[width=\columnwidth]
\caption{Volume of the PCA bounding box algorithms for the lion model.
The values in
the table are the average of results of 100 runs of the algorithms, each time adding the corresponding number of points.}
\label{table:PCABB-real-lion-vol} \centering
\begin{tabular}{|l|c|c|c|c|c|}
\hline
\multicolumn{6}{|c|}{Adding points, dynamic version, $\epsilon =0.005$} \\
\hline
 algorithm  & 1pnt & 10pnt& 100 pnts & 1000 pnts& 10000 pnts \\
\hline
PCA-AP      &  285.5  &   644.6 & 856.3 &  1149.1  &   1236.4  \\
\hline
PCA-AGP, PCA-EGP  & 295.5  & 662.7  & 880.3 &  1221.8  &    1263.2  \\
\hline
\end{tabular}
\end{table*}

\begin{table*}[width=\columnwidth]
\caption{Volumes of the PCA bounding boxes algorithms for lion model for different grid density.
The values in
the table are the average of results of 100 runs of the algorithms, each time adding the corresponding number of points.}
\label{table:PCABB-real-lion-eps-vol} \centering
\begin{tabular}{|l|c|c|c|c|c|c|}
\hline
\multicolumn{7}{|c|}{ Adding 100 points, dynamic version} \\
\hline
 algorithm  & $\epsilon =0.005$ & $\epsilon =0.01$ &   $\epsilon =0.03$ & $\epsilon =0.05$ & $\epsilon =0.1$ & $\epsilon =0.2$ \\
\hline
PCA-AP      &   856.3 &  856.3  & 856.3 & 856.3  &  856.3   &  856.3   \\
\hline
PCA-AGP, PCA-EGP  &  880.3 &   904.3  & 942.3 &   1080.1  &  1292.7  &  2324.8   \\
\hline
\end{tabular}
\end{table*}

Tight bounding boxes for the PCA-AGP and PCA-EGP algorithms can be obtained by the following approach. Let $P_1$ be the supporting plane at the extremal grid point along 
one principal direction, and let $P_2$ be the plane parallel to $P_1$, such that the distance between $P_1$ and $P_2$ is
$\sqrt{3} \epsilon / 2$, and $P_2$ intersect or is tangent to the grid. We denote by $S$ the subspace between
$P_1$ and $P_2$. 
Then, the candidates points for the chosen principal direction, that determine the tight bounding box, are all original points
that belong to cells that have intersection with $S$.
See Fig.~\ref{fig:grid-tight-bb} for an illustration. However, in the worst case all original points have to be checked.

\begin{figure}[h!]
  \centering
  \includegraphics[scale=0.51]{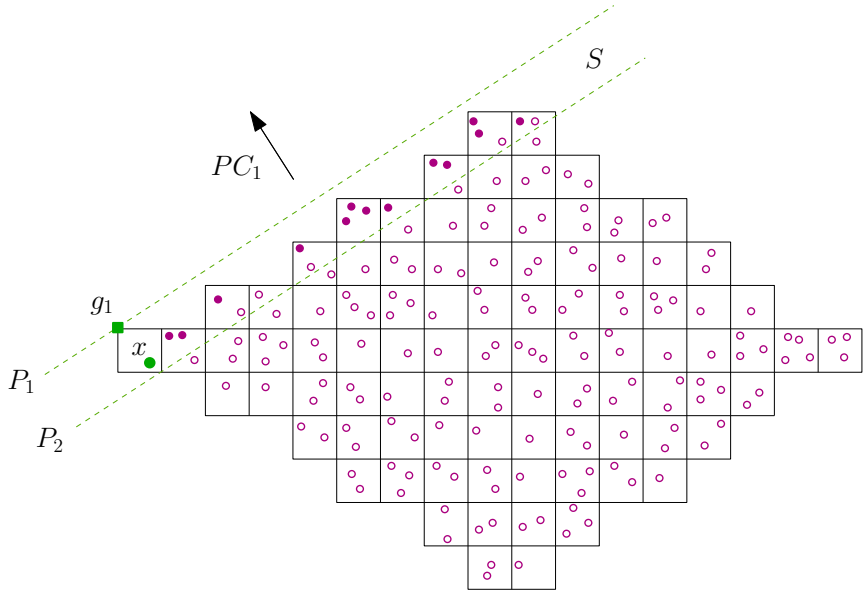}
  \caption{For the principal direction $PC_1$, the algorithms  PCA-AGP and PCA-EGP detect
  the point $g_1$ as extremal grid point, and the point $x$ as extremal point of the original point set.
  However, there are other points (the violet colored circles) that are further than $x$ along $PC_1$.}
  \label{fig:grid-tight-bb}
\end{figure}

Further (theoretical) improvement of the algorithms presented here could be obtained if, instead of the point set, we consider its convex hull 
when we look for extremal points. This only makes sense if the convex hull  is computed
dynamically. Otherwise, computing the static convex hull of the points will be more expensive than finding the exact extremal points
by scanning all points. 

\begin{figure}[h!]
  \centering
  \includegraphics[scale=0.51]{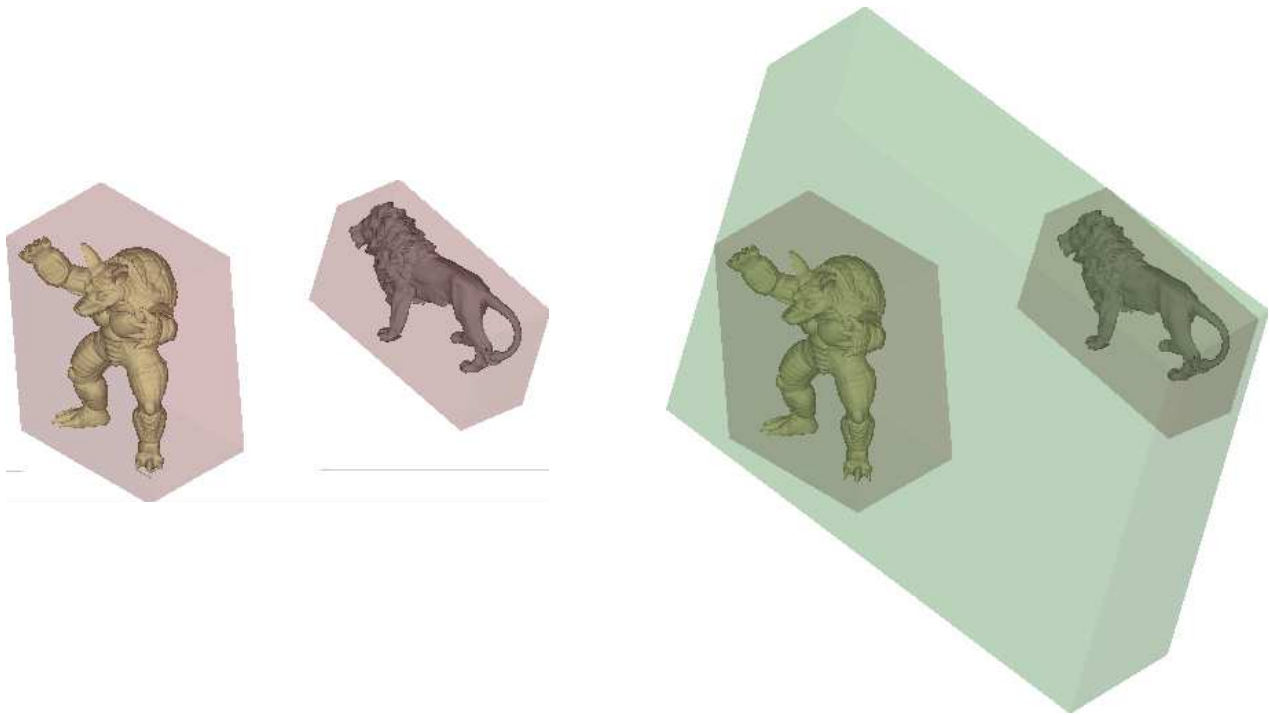}
  \caption{Left: two objects with their PCA bounding boxes. 
           Right: the common PCA bounding box. 
           Computing the common PCA bounding box dynamically takes $0.004$ seconds, while the static version
           takes $0.02$ seconds.}
  \label{fig:common-bb}
\end{figure}
\subsubsection{Computing efficiently a bounding box of several objects}

An interesting application of the closed-form solutions from Section~\ref{sec:PCsDynamicaly1}
is to compute the principal components of two or more objects with already known covariance matrices.
Since $\sigma_{ij}$ and $\sigma_{ij}^m$ in (\ref{eq:add80}) and (\ref{eq:delete80})
are previously known, $\sigma_{ij}'$ can be computed in $O(1)$ time. Thus, for fixed $d$ 
the new covariance matrix $\Sigma$ and the new principal components can be computed also in $O(1)$ time.
This is a significant improvement over the commonly used approach to compute the principal
components from scratch, which take time linear in the number of points.
Efficient computation of the common PCA bounding box of several object 
 is straightforward. See Fig.~\ref{fig:common-bb} for an illustration in $\mathbb{R}^3$.

\section{Conclusion and future work}\label{sec:conclusion}

The main contribution of this paper are
the closed-form solutions for updating the principal components of a dynamic point set.
The new principal components can be computed in constant time,
when a constant
number of points are added or deleted from the point set.
This is a significant improvement of the commonly used approach, when the 
new principal components are computed from scratch, which takes linear time.
The advantages of the theoretical results were verified and
presented in the context of computing dynamic PCA bounding boxes,
a very important application in many fields including computer graphics, where the PCA boxes are used to 
maintain hierarchical data structures for fast rendering of a scene or for collision detection. 
We have presented three practical simple algorithms and compare their performances.

In the appendix we consider the computation of the principal components of a dynamic  continuous point set.
We give closed form-solutions when the point set is a convex polytope or the boundary of a 
convex polytope in $\mathbb{R}^2$ or $\mathbb{R}^3$.

An interesting open problem is to find a closed-form solution for dynamical point sets different
from convex polyhedra, for example, implicit surfaces or B-splines.
An implementation of computing principal components in a dynamic and continuous setting is planned for future work.
Applications of the results presented here in other fields, like computer vision or visualization,
are of high interest.

There are several further improvements and open problems regarding computing dynamic PCA bounding boxes.
Instead of subdividing the space by a simple regular grid, one can use more
sophisticated data structures, like octrees or binary space partition-trees to speed up the time needed to find
the extremal points along the principal directions.
A practical, implementable algorithm for computing the dynamic convex hull of the point set
(computing  extremal point dynamically) would also improve the dynamic PCA bounding box algorithms.
Finding coresets for dynamic PCA bounding boxes will lead to efficient approximation algorithms for PCA bounding boxes.
We are also not aware of data structures for efficient computation of extremal points both approximately 
and dynamically. Such data structures are also of interest.

\section{Appendix \\ \large{Updating the principal components efficiently - continuous case}}


Here, we consider the computation of the principal components of a dynamic  continuous point set.
We present a closed form-solutions when the point set is a convex polytope or a boundary of a 
convex polytope in $\mathbb{R}^2$ or $\mathbb{R}^3$.
When the point set is a boundary of a convex polytope,
we can update the new principal components in $O(k)$ time, 
for both deletion and addition,
under the assumption that we know the $k$ facets in which the polytope changes.
Under the same assumption, 
when the point set is a convex polytope in $\mathbb{R}^2$ or $\mathbb{R}^3$, 
we can update the principal components in $O(k)$ time after adding points.
But, to update the principal components after deleting points from a convex polytope in $\mathbb{R}^2$ or $\mathbb{R}^3$
we need $O(n)$ time.
This is due to the fact that, after a deletion
the center of gravity of the old  convex hull (polyhedron)
could lie outside the new convex hull, and therefore, a retetrahedralization is needed 
(see Subsection~\ref{subsec:cpca-polytope} and Subsection~\ref{subsec:cpca-polygon} for details).

\subsection{Continuous PCA in $\mathbb{R}^3$}

\subsubsection{Continuous PCA over a (convex) polyhedron in $\mathbb{R}^3$}
\label{subsec:cpca-polytope}

Let $P$ be a point set in $\mathbb{R}^3$, and let $X$ be its convex hull.
We assume that the boundary of $X$ is triangulated (if it is not, we can triangulate
it in preprocessing). 
We choose an arbitrary point $\vec{o}$ in the interior of $X$, for example, 
we can
choose that $\vec{o}$ is the center of gravity of the boundary of $X$.
Each triangle from the boundary together with $\vec{o}$ forms a tetrahedron. 
Let the number
of such formed tetrahedra be $n$.
The $k$-th tetrahedron, with vertices 
${\vec{x}_{1,k}}, {\vec{x}_{2,k}}, {\vec{x}_{3,k}}, {\vec{x}_{4,k}} = \vec{o}$, 
can be represented in a parametric form by
$
{\vec{Q}_i}(s,t,u)={\vec{x}_{4,i} +
s \, (\vec{x}_{1,i}} - \vec{x}_{4,i}) +
t \, ( {\vec{x}_{2,i}} - \vec{x}_{4,i} ) +
u \, ( {\vec{x}_{3,i}} - \vec{x}_{4,i} ),
$
for $0 \leq s, t, u \leq 1$,
and $s+t+u \leq 1$.
For $1 \leq i \leq 3$, we use $x_{i,j,k}$ to denote the $i$-th coordinate of the
vertex $\vec{x}_j$ of the polyhedron $\vec{Q}_k$.

The center of gravity of the $k$-th tetrahedron is
$$
\begin{array}{lll}
\vec{\mu}_k& = & \frac{\int_0^1 \int_0^{1-s} \int_0^{1- s -t} \rho({\vec{Q}_k}(s,t)) {\vec{Q}_i}(s,t) \, d u \,d t \, d s}
{\int_0^1 \int_0^{1-s} \int_0^{1- s -t} \rho({\vec{Q}_k}(s,t)) \, d u \, d t \, d s},
\end{array}
$$
where $\rho({\vec{Q}_k}(s,t))$ is a mass density at a point ${\vec{Q}_k}(s,t)$.
Since, we  can assume $\rho({\vec{Q}_k}(s,t)) = 1$, we have
$$
\begin{array}{l}
\vec{\mu}_k \; = \; \frac{\int_0^1 \int_0^{1-s} \int_0^{1- s -t}  {\vec{Q}_k}(s,t) \, d u \,d t \, d s}
{\int_0^1 \int_0^{1-s} \int_0^{1- s -t}  \, d u \, d t \, d s} 
 \; = \; \frac{{\vec{x}_{1,k}}+ {\vec{x}_{2,k}} + {\vec{x}_{3,k}} + \vec{x}_{4,k}}{4}.
\end{array}
$$
The contribution of each tetrahedron to the center of gravity of $X$
is proportional to its volume. If $M_k$ is the $3 \times 3$ matrix whose
$l$-th row is ${\vec{x}_{l,k}}- {\vec{x}_{4,k}}$, for $l=1 \dots 3$, then
the volume of the $k$-th tetrahedron is
$$
v_k=\mbox{volume}(Q_k) =
\frac{
|det(M_k)|}{3!}.
$$
We introduce a weight to each
tetrahedron that is proportional with its volume, define as
$$
w_k = \frac{v_k}{\sum_{k=1}^{n} v_k} = \frac{v_k}{v},
$$
where $v$ is the volume of $X$. Then, the center of gravity of $X$ is
$$
\vec{\mu}  = 
\sum_{k=1}^{n} w_k  \vec{\mu}_k.
$$

The covariance matrix of the $k$-th tetrahedron is
$$
\begin{array}{lll}
\Sigma_k & = &
\frac{\int_0^1 \int_0^{1-s} \int_0^{1- s -t} {(\vec{Q}_k}(s,t,u) - \vec{\mu}) \,
(\vec{Q}_k(s,t,u) - \vec{\mu})^T  \, d u \, d t \, d s}
{\int_0^1 \int_0^{1-s} \int_0^{1- s -t} \, d u \, d t \, d s} \\ 
& = & 
\frac{1}{20} \Big( \sum_{j=1}^{4} \sum_{h=1}^{4}
( {\vec{x}_{j,k}} - \vec{\mu} ) {({\vec{x}_{h,k}} - \vec{\mu} )}^T 
+ \\
& & \quad \;\;\;\, \sum_{j=1}^{4} ( {\vec{x}_{j,k}} - \vec{\mu} ) {({\vec{x}_{j,k}} - \vec{\mu} )}^T  \Big).
\end{array}
$$

The $(i,j)$-th element of $\Sigma_k$, $i, j \in \{1,2,3\}$, is
$$
\begin{array}{lll}
\sigma_{ij,k} & = &
 \frac{1}{20} \Big( \sum_{l=1}^{4} \sum_{h=1}^{4}
( x_{i,l,k} - \mu_i ) (x_{j, h,k} - \mu_j ) 
+ \\
& & \sum_{l=1}^{4} ( x_{i,l,k}- \mu_i ) (x_{j,l,k} - \mu_j ) \Big),
\end{array}
$$
with $\vec{\mu}=(\mu_1, \mu_2, \mu_3 )$.
Finally, the covariance matrix of $X$ is
$$
\begin{array}{lll}
\Sigma & = &
\sum_{i=1}^{n} w_i \Sigma_i,
\end{array}
$$

with $(i,j)$-th element

$$
\begin{array}{lll}
\sigma_{ij} & = &
 \frac{1}{20} \Big( \sum_{k=1}^{n} \sum_{l=1}^{4} \sum_{h=1}^{4}
w_i ( x_{i,l,k} - \mu_i ) (x_{j, h,k} - \mu_j ) 
+ \\
& & \hspace{0.75cm} \sum_{k=1}^{n} \sum_{l=1}^{4} w_i ( x_{i,l,k}- \mu_i ) (x_{j,l,k} - \mu_j ) \Big).
\end{array}
$$

We would like to note that the above expressions hold also
for any non-convex polyhedron that can be tetrahedralized.
A star-shaped object, where $\vec{o}$ is the kernel of the object, is such example.

\newpage
\noindent
{\bf {Adding points}}
\label{subsec:adding-point20}
\bigskip

We add points to $P$, obtaining a new point set $P'$.
Let $X'$ be the convex hull of $P'$.
We consider that $X'$ is  obtained from $X$ by deleting $n_d$, and
adding $n_a$ tetrahedra.
Let
$$
v' = \sum_{k=1}^{n} v_k + \sum_{k=1}^{n_a} v_k -\sum_{k=1}^{n_d} v_k
   = v + \sum_{k=1}^{n_a} v_k -\sum_{k=1}^{n_d} v_k.
$$

The center of gravity of $X'$ is
\begin{equation}\label{eq:add100}
\begin{array}{lll}
\vec{\mu}'  & = & \sum_{k=1}^{n}w_k'\vec{\mu}_k +
            \sum_{k=1}^{n_a}w_k'\vec{\mu}_k -
            \sum_{k=1}^{n_d}w_k'\vec{\mu}_k \vspace{0.3cm} \\
      & = & \frac{1}{v'}\left( \sum_{k=1}^{n}v_k\vec{\mu}_k +
            \sum_{k=1}^{n_a}v_k\vec{\mu}_k -
            \sum_{k=1}^{n_d}v_k\vec{\mu}_k \right) \vspace{0.3cm} \\
      & = & \frac{1}{v'}\left( v \vec{\mu} +
            \sum_{k=1}^{n_a}v_k\vec{\mu}_k -
            \sum_{k=1}^{n_d}v_k\vec{\mu}_k\right).
\end{array}
\end{equation}

Let
$$
\vec{\mu}_a = \frac{1}{v'}\sum_{k=1}^{n_a} v_k\vec{\mu}_k, \;\;\;\text{and}\;\;\;
\vec{\mu}_d = \frac{1}{v'}\sum_{k=1}^{n_d} v_k\vec{\mu}_k.
$$

Then, we can rewrite (\ref{eq:add100}) as
\begin{equation}\label{eq:add110}
\vec{\mu}'  =  \frac{v}{v'} \vec{\mu} + \vec{\mu}_a - \vec{\mu}_d.
\end{equation}

The $i$-th component of $\vec{\mu}_a$ and $\vec{\mu}_d$, $1 \leq i \leq 3$,
is denoted by $\mu_{i,a}$ and $\mu_{i,d}$, respectively.
The $(i,j)$-th component, $\sigma_{ij}'$, $1 \leq i, j \leq 3$, of the covariance matrix 
$\Sigma'$ of $X'$ is
$$
\begin{array}{lll}
\sigma_{ij}' & = &
 \frac{1}{20} \Big( \sum_{k=1}^{n} \sum_{l=1}^{4} \sum_{h=1}^{4}
w_k' ( x_{i,l,k} - \mu_i' ) (x_{j, h,k} - \mu_j' ) + \vspace{0.3cm}\\
& & 
\hspace{0.75cm} \sum_{k=1}^{n} \sum_{l=1}^{4} w_k' ( x_{i,l,k}- \mu_i' ) (x_{j,l,k} - \mu_j ') \Big) + \vspace{0.3cm}\\
& &
\frac{1}{20} \Big( \sum_{k=1}^{n_a} \sum_{l=1}^{4} \sum_{h=1}^{4}
w_k' ( x_{i,l,k} - \mu_i' ) (x_{j, h,k} - \mu_j' ) + \vspace{0.3cm}\\
& & 
\hspace{0.75cm} \sum_{k=1}^{n_a} \sum_{l=1}^{4} w_k' ( x_{i,l,k}- \mu_i' ) (x_{j,l,k} - \mu_j ') - \vspace{0.3cm}\\
& &
\hspace{0.75cm}\sum_{k=1}^{n_d} \sum_{l=1}^{4} \sum_{h=1}^{4}
w_k' ( x_{i,l,k} - \mu_i' ) (x_{j, h,k} - \mu_j' ) - \vspace{0.3cm}\\
& & 
\hspace{0.75cm} \sum_{k=1}^{n_d} \sum_{l=1}^{4} w_k' ( x_{i,l,k}- \mu_i' ) (x_{j,l,k} - \mu_j ')\Big).
\end{array}
$$

Let

$$
\sigma_{ij}' = \frac{1}{20} (\sigma_{ij,11}' + \sigma_{ij,12}' + \sigma_{ij,21}'+ \sigma_{ij,22}'
               - \sigma_{ij,31}'- \sigma_{ij,32}'),
$$

where,
\begin{equation}\label{eq:add130}
\sigma_{ij,11}' =  \sum_{k=1}^{n} \sum_{l=1}^{4} \sum_{h=1}^{4}
w_k' ( x_{i,l,k} - \mu_i' ) (x_{j, h,k} - \mu_j' ), 
\end{equation}

\begin{equation}\label{eq:add140}
\sigma_{ij,12}' =  \sum_{k=1}^{n} \sum_{l=1}^{4} w_k' ( x_{i,l,k}- \mu_i' ) (x_{j,l,k} - \mu_j '),
\end{equation}

\begin{equation}\label{eq:add150}
\sigma_{ij,21}' =  \sum_{k=1}^{n_a} \sum_{l=1}^{4} \sum_{h=1}^{4}
w_k' ( x_{i,l,k} - \mu_i' ) (x_{j, h,k} - \mu_j' ), 
\end{equation}

\begin{equation}\label{eq:add160}
\sigma_{ij,22}' = \sum_{k=1}^{n_a} \sum_{l=1}^{4} w_k' ( x_{i,l,k}- \mu_i' ) (x_{j,l,k} - \mu_j '),
\end{equation}

\begin{equation}\label{eq:add170}
\sigma_{ij,31}' =  \sum_{k=1}^{n_d} \sum_{l=1}^{4} \sum_{h=1}^{4}
w_k' ( x_{i,l,k} - \mu_i' ) (x_{j, h,k} - \mu_j' ), 
\end{equation}

\begin{equation}\label{eq:add180}
\sigma_{ij,32}' = \sum_{k=1}^{n_d} \sum_{l=1}^{4} w_k' ( x_{i,l,k}- \mu_i' ) (x_{j,l,k} - \mu_j ').
\end{equation}

Plugging-in the values of $\mu_i'$ and $\mu_j'$ in (\ref{eq:add130}), we obtain:

\begin{equation}\label{eq:add190}
\begin{array}{lll}

\sigma_{ij,11}'& = & \sum_{k=1}^{n} \sum_{l=1}^{4} \sum_{h=1}^{4}
w_k' ( x_{i,l,k} - \frac{v}{v'} \mu_i - \mu_{i,a} + \mu_{i,d} )
     (x_{j, h,k} - \frac{v}{v'} \mu_j - \mu_{j,a} + \mu_{j,d} ) \vspace{0.3cm} \\
              & = & \sum_{k=1}^{n} \sum_{l=1}^{4} \sum_{h=1}^{4}
                    w_k' ( x_{i,l,k} - \mu_i + \mu_i(1- \frac{v}{v'}) - \mu_{i,a} + \mu_{i,d} ) \vspace{0.3cm} \\
              & &\hspace{3.7cm}(x_{j, h,k} - \mu_j + \mu_j(1-\frac{v}{v'})  - \mu_{j,a} + \mu_{j,d} ) \vspace{0.3cm} \\
              & = & \sum_{k=1}^{n} \sum_{l=1}^{4} \sum_{h=1}^{4}
                      w_k' ( x_{i,l,k} - \mu_i)(x_{j, h,k} - \mu_j) + \vspace{0.3cm} \\
              & &  \sum_{k=1}^{n} \sum_{l=1}^{4} \sum_{h=1}^{4}
                      w_k' ( x_{i,l,k} - \mu_i)(\mu_j(1-\frac{v}{v'})  - \mu_{j,a} + \mu_{j,d} ) + \vspace{0.3cm} \\
              & & \sum_{k=1}^{n} \sum_{l=1}^{4} \sum_{h=1}^{4}
                      w_k' (\mu_i(1- \frac{v}{v'}) - \mu_{i,a} + \mu_{i,d})(x_{j, h,k} - \mu_j) + \vspace{0.3cm} \\
              & & \sum_{k=1}^{n} \sum_{l=1}^{4} \sum_{h=1}^{4}
                      w_k' (\mu_i(1- \frac{v}{v'}) - \mu_{i,a} + \mu_{i,d})(\mu_j(1-\frac{v}{v'})  - \mu_{j,a} + \mu_{j,d}).
\end{array}
\end{equation}

Since $ \sum_{k=1}^{n} \sum_{l=1}^{4} w_k' ( x_{i,l,k} - \mu_i)=0$, $1 \leq i \leq 3$, we have

\begin{equation}\label{eq:add200}
\begin{array}{lll}
\sigma_{ij,11}' & = & \frac{1}{v'}\sum_{k=1}^{n} \sum_{l=1}^{4} \sum_{h=1}^{4}
                      v_k ( x_{i,l,k} - \mu_i)(x_{j, h,k} - \mu_j) + \vspace{0.3cm} \\
                &   &  \frac{1}{v'} \sum_{k=1}^{n} \sum_{l=1}^{4} \sum_{h=1}^{4}
      v_k (\mu_i(1- \frac{v}{v'}) - \mu_{i,a} + \mu_{i,d})(\mu_j(1-\frac{v}{v'})  - \mu_{j,a} + \mu_{j,d})\vspace{0.3cm} \\
      & = & \frac{1}{v'}\sum_{k=1}^{n} \sum_{l=1}^{4} \sum_{h=1}^{4}
                      v_k ( x_{i,l,k} - \mu_i)(x_{j, h,k} - \mu_j) + \vspace{0.3cm} \\
                &   &  16 \frac{v}{v'}(\mu_i(1- \frac{v}{v'}) - \mu_{i,a} + \mu_{i,d})(\mu_j(1-\frac{v}{v'})  - \mu_{j,a} + \mu_{j,d}).
\end{array}
\end{equation}

Plugging-in the values of $\mu_i'$ and $\mu_j'$ in (\ref{eq:add140}), we obtain:

\begin{equation}\label{eq:add210}
\begin{array}{lll}
\sigma_{ij,12}' & = & \sum_{k=1}^{n} \sum_{l=1}^{4}
w_k' ( x_{i,l,k} - \frac{v}{v'} \mu_i - \mu_{i,a} + \mu_{i,d} )
     (x_{j, h,k} - \frac{v}{v'} \mu_j - \mu_{j,a} + \mu_{j,d} ) \vspace{0.3cm} \\
              & = & \sum_{k=1}^{n} \sum_{l=1}^{4}
                    w_k' ( x_{i,l,k} - \mu_i + \mu_i(1- \frac{v}{v'}) - \mu_{i,a} + \mu_{i,d} ) \vspace{0.3cm} \\
              & &\hspace{2.6cm}(x_{j, h,k} - \mu_j + \mu_j(1-\frac{v}{v'})  - \mu_{j,a} + \mu_{j,d} ) \vspace{0.3cm} \\
              & = & \sum_{k=1}^{n} \sum_{l=1}^{4} 
                      w_k' ( x_{i,l,k} - \mu_i)(x_{j, h,k} - \mu_j) + \vspace{0.3cm} \\
              & &  \sum_{k=1}^{n} \sum_{l=1}^{4} 
                      w_k' ( x_{i,l,k} - \mu_i)(\mu_j(1-\frac{v}{v'})  - \mu_{j,a} + \mu_{j,d} ) + \vspace{0.3cm} \\
              & & \sum_{k=1}^{n} \sum_{l=1}^{4} 
                      w_k' (\mu_i(1- \frac{v}{v'}) - \mu_{i,a} + \mu_{i,d})(x_{j, h,k} - \mu_j) + \vspace{0.3cm} \\
              & & \sum_{k=1}^{n} \sum_{l=1}^{4} 
                      w_k' (\mu_i(1- \frac{v}{v'}) - \mu_{i,a} + \mu_{i,d})(\mu_j(1-\frac{v}{v'})  - \mu_{j,a} + \mu_{j,d}).
\end{array}
\end{equation}

Since $ \sum_{k=1}^{n} \sum_{l=1}^{4} w_k' ( x_{i,l,k} - \mu_i)=0$, $1 \leq i \leq 3$, we have

\begin{equation}\label{eq:add220}
\begin{array}{lll}
\sigma_{ij,12}' & = & \frac{1}{v'}\sum_{k=1}^{n} \sum_{l=1}^{4}
                      v_k ( x_{i,l,k} - \mu_i)(x_{j, h,k} - \mu_j) + \vspace{0.3cm} \\
                &   &  \frac{1}{v'} \sum_{k=1}^{n} \sum_{l=1}^{4}
      v_k (\mu_i(1- \frac{v}{v'}) - \mu_{i,a} + \mu_{i,d})(\mu_j(1-\frac{v}{v'})  - \mu_{j,a} + \mu_{j,d})\vspace{0.3cm} \\
      & = & \frac{1}{v'}\sum_{k=1}^{n} \sum_{l=1}^{4}
                      v_k ( x_{i,l,k} - \mu_i)(x_{j, h,k} - \mu_j) + \vspace{0.3cm} \\
                &   &  4\frac{v}{v'}(\mu_i(1- \frac{v}{v'}) - \mu_{i,a} + \mu_{i,d})(\mu_j(1-\frac{v}{v'})  - \mu_{j,a} + \mu_{j,d}).
\end{array}
\end{equation}

From (\ref{eq:add210}) and (\ref{eq:add220}), we obtain

\begin{equation}\label{eq:add230}
\begin{array}{lll}
\sigma_{ij,1}' & = & \sigma_{ij,11}' + \sigma_{ij,12}' \vspace{0.3cm}\\
               & = & \sigma_{ij} + 20 \frac{v}{v'}(\mu_i(1- \frac{v}{v'}) - \mu_{i,a} + \mu_{i,d})(\mu_j(1-\frac{v}{v'})  - \mu_{j,a} + \mu_{j,d}).
      \end{array}
\end{equation}

Note that $\sigma_{ij,1}'$ can be computed in $O(1)$ time.
The components $\sigma_{ij,21}'$ and $\sigma_{ij,22}'$ can be computed in $O(n_a)$ time,
while $O(n_d)$ time is needed for computing $\sigma_{ij,31}'$ and  $\sigma_{ij,32}'$.
Thus, $\vec{\mu'}$ and

\begin{equation}\label{eq:add240}
\begin{array}{lll}
\sigma_{ij}' & = & \frac{1}{20}(\sigma_{ij,11}' + \sigma_{ij,12}' + \sigma_{ij,21}' + \sigma_{ij,22}' +
                   \sigma_{ij,31}' + \sigma_{ij,32}') \vspace{0.3cm}\\
               & = & \frac{1}{20}(\sigma_{ij} + \sigma_{ij,21}' + \sigma_{ij,22}' + \sigma_{ij,31}' + \sigma_{ij,32}) +  \vspace{0.3cm}\\ 
               & & \frac{v}{v'}(\mu_i(1- \frac{v}{v'}) - \mu_{i,a} + \mu_{i,d})(\mu_j(1-\frac{v}{v'})  - \mu_{j,a} + \mu_{j,d})
\end{array}
\end{equation}

can be computed in $O(n_a + n_d)$ time.

\bigskip
\noindent
{\bf {Deleting points}}
\label{subsec:deleting-point20}
\bigskip
 
Let the new convex hull be obtained by deleting
$n_d$ tetrahedra from and added $n_a$ tetrahedra to the old convex hull. 
If the interior point $\vec{o}$ (needed for a tetrahedronization of a convex 
polytope), after deleting points, lies inside the new convex hull, then
the same formulas and time complexity, as by adding points, follow.
If $\vec{o}$ lie outside the new convex hull, then, we need
to choose a new interior point $\vec{o}'$, and recompute the new tetrahedra associated with it.
Thus, we need in total $O(n)$ time to update the principal components.

Under certain assumptions, we can recompute the new principal components faster:

\begin{itemize}

\item If we know that a certain point of the polyhedron will never be deleted, we can choose $\vec{o}$
to be that point. In that case, we also have the same 
closed-formed solution as for adding a point.

\item Let the facets of the convex polyhedron have similar (uniformly distributed) area.
We choose $\vec{o}$ to be the center of gravity of the polyhedron. Then, we can expect
that after deleting a point, $\vec{o}$ will remain in the new convex hull.
However, after several deletion, $\vec{o}$ could lie outside the convex hull, and then
we need to recompute it and the associate tetrahedra with it.

\end{itemize}

Note, that in the case when we consider boundary of a convex polyhedron 
(Subsection~\ref{subsec:cpca-polytope-boundary} and Subsection~\ref{subsec:cpca-polygon-boundary}),
we do not need an interior point $\vec{o}$
and the same time complexity holds for both adding and deleting points.

\subsubsection{Continuous PCA over a boundary of a polyhedron}
\label{subsec:cpca-polytope-boundary}

Let $X$ be a polyhedron in $\mathbb{R}^3$.
We assume that the boundary of  $X$ is triangulated (if it is not, we can 
triangulate it in preprocessing), containing
$n$ triangles.
The $k$-th triangle, with vertices 
${\vec{x}_{1,k}}, {\vec{x}_{2,k}}, {\vec{x}_{3,k}}$, can be represented 
in a parametric form by
$
{\vec{T}_k}(s,t)={\vec{x}_{1,k}} + 
s \, ( {\vec{x}_{2,k}} - {\vec{x}_{1,k}} ) +
t \, ( {\vec{x}_{3,k}} - {\vec{x}_{1,k}} ),
$
for $0 \leq s, t \leq 1$,
and $s+t \leq 1$. 
For $1 \leq i \leq 3$, we denote by $x_{i, j, k}$ the $i$-th coordinate of the vertex $\vec{x}_j$
of the triangle $\vec{T}_k$.

The center of gravity of the $k$-th triangle is
$$
\vec{\mu}_k=\frac{\int_0^1 \int_0^{1-s} {\vec{T}_i}(s,t) \, d t \, d s}
{\int_0^1 \int_0^{1-s} d t \, d s} =
\frac{{\vec{x}_{1,k}}+ {\vec{x}_{2,k}} + {\vec{x}_{3,k}}}{3}.
$$
The contribution of each triangle to the center of gravity of the triangulated
surface is proportional to its area.
The area of the $k$-th triangle is
$$
a_k=\mbox{area}(T_k) =
\frac{
|( {\vec{x}_{2,k}} - {\vec{x}_{1,k}} )| \times
|( {\vec{x}_{3,k}} - {\vec{x}_{1,k}} )|}{2}.
$$
We introduce a weight to each
triangle that is proportional with its area, define as
$$
w_k = \frac{a_k}{\sum_{i=1}^{n} a_k} = \frac{a_k}{a},
$$
where $a$ is the area of $X$. Then, the center of gravity of the boundary of $X$ is
$$
\vec{\mu}  = 
\sum_{k=1}^{n} w_k \vec{\mu}_k.
$$
The covariance matrix of the $k$-th triangle is
$$
\begin{array}{lll}
\Sigma_k & = &
\frac{\int_0^1 \int_0^{1-s} {(\vec{T}_k}(s,t) - \vec{\mu}) \,
(\vec{T}_k (s,t) - \vec{\mu})^T \, d t \, d s}
{\int_0^1 \int_0^{1-s} \, d t \, d s} \\
& = & 
\frac{1}{12} \Big( \sum_{j=1}^{3} \sum_{h=1}^{3}
( {\vec{x}_{j,k}} - \vec{\mu} ) {({\vec{x}_{h,k}} - \vec{\mu} )}^T 
+ \\
& & \quad \;\;\;\, \sum_{j=1}^{3} ( {\vec{x}_{j,k}} - \vec{\mu} ) {({\vec{x}_{j,k}} - \vec{\mu} )}^T \Big).
\end{array}
$$
The $(i,j)$-th element of $\Sigma_k$, $i, j\in \{1, 2, 3\}$, is 
$$
\begin{array}{lll}
\sigma_{ij, k} & = &
 \frac{1}{12} \Big( \sum_{l=1}^{3} \sum_{h=1}^{3}
( x_{i, l, k} - \mu_i ) (x_{j, h, k} - \mu_j ) 
+ \\
& & \quad \;\;\;\, \sum_{l=1}^{3} ( x_{i, l, k} - \mu_i ) (x_{j, l, k} - \mu_j ) \Big),
\end{array}
$$
with $\vec{\mu}=(\mu_1, \mu_2, \mu_3 )$.
Finally, the covariance matrix of the boundary of $X$ is
$$
\begin{array}{lll}
\Sigma & = &
\sum_{k=1}^{n} w_k \Sigma_k.
\end{array}
$$

\bigskip
\noindent
{\bf {Adding points}}
\label{subsec:adding-point20-B3D}
\bigskip

We add points to $X$.
Let $X'$ be the new convex hull.
We assume that $X'$ is  obtained from $X$ by deleting $n_d$, and
adding $n_a$ tetrahedra.
Then the sum of the areas of all triangles is
$$
a' = \sum_{k=1}^{n} a_k + \sum_{k=1}^{n_a} a_k -\sum_{k=1}^{n_d} a_k
   = a + \sum_{k=1}^{n_a} a_k -\sum_{k=1}^{n_d} a_k.
$$

The center of gravity of $X'$ is
\begin{equation}\label{eq:add100-B3D}
\begin{array}{lll}
\vec{\mu}'  & = & \sum_{k=1}^{n}w_k'\vec{\mu}_k +
            \sum_{k=1}^{n_a}w_k'\vec{\mu}_k -
            \sum_{k=1}^{n_d}w_k'\vec{\mu}_k \vspace{0.3cm} \\
      & = & \frac{1}{a'}\left( \sum_{k=1}^{n}a_k\vec{\mu}_k +
            \sum_{k=1}^{n_a}a_k\vec{\mu}_k -
            \sum_{k=1}^{n_d}a_k\vec{\mu}_k \right) \vspace{0.3cm} \\
      & = & \frac{1}{a'}\left( a \vec{\mu} +
            \sum_{k=1}^{n_a}a_k\vec{\mu}_k -
            \sum_{k=1}^{n_d}a_k\vec{\mu}_k\right).
\end{array}
\end{equation}

Let
$$
\vec{\mu}_a = \frac{1}{a'}\sum_{k=1}^{n_a} a_k\vec{\mu}_k, \;\;\;\text{and}\;\;\;
\vec{\mu}_d = \frac{1}{a'}\sum_{k=1}^{n_d} a_k\vec{\mu}_k.
$$

Then, we can rewrite (\ref{eq:add100-B3D}) as
\begin{equation}\label{eq:add110-B3D}
\vec{\mu}'  =  \frac{a}{a'} \vec{\mu} + \vec{\mu}_a - \vec{\mu}_d.
\end{equation}

The $i$-th component of $\vec{\mu}_a$ and $\vec{\mu}_d$, $1 \leq i \leq 3$,
is denoted by $\mu_{i,a}$ and $\mu_{i,d}$, respectively.
The $(i,j)$-th component, $\sigma_{ij}'$, $1 \leq i, j \leq 3$, of the covariance matrix 
$\Sigma'$ of $X'$ is
%
$$
\begin{array}{lll}
\sigma_{ij}' & = &
 \frac{1}{12} \Big( \sum_{k=1}^{n} \sum_{l=1}^{3} \sum_{h=1}^{3}
w_k' ( x_{i,l,k} - \mu_i' ) (x_{j, h,k} - \mu_j' ) + \vspace{0.3cm}\\
& & 
\hspace{0.75cm} \sum_{k=1}^{n} \sum_{l=1}^{3} w_k' ( x_{i,l,k}- \mu_i' ) (x_{j,l,k} - \mu_j ') \Big) + \vspace{0.3cm}\\
& &
\frac{1}{12} \Big( \sum_{k=1}^{n_a} \sum_{l=1}^{3} \sum_{h=1}^{3}
w_k' ( x_{i,l,k} - \mu_i' ) (x_{j, h,k} - \mu_j' ) + \vspace{0.3cm}\\
& & 
\hspace{0.75cm} \sum_{k=1}^{n_a} \sum_{l=1}^{3} w_k' ( x_{i,l,k}- \mu_i' ) (x_{j,l,k} - \mu_j ') - \vspace{0.3cm}\\
& &
\hspace{0.75cm}\sum_{k=1}^{n_d} \sum_{l=1}^{3} \sum_{h=1}^{3}
w_k' ( x_{i,l,k} - \mu_i' ) (x_{j, h,k} - \mu_j' ) - \vspace{0.3cm}\\
& & 
\hspace{0.75cm} \sum_{k=1}^{n_d} \sum_{l=1}^{3} w_k' ( x_{i,l,k}- \mu_i' ) (x_{j,l,k} - \mu_j ')\Big).
\end{array}
$$

Let

$$
\sigma_{ij}' = \frac{1}{12} (\sigma_{ij,11}' + \sigma_{ij,12}' + \sigma_{ij,21}'+ \sigma_{ij,22}'
               - \sigma_{ij,31}'- \sigma_{ij,32}'),
$$

where,
\begin{equation}\label{eq:add130-B3D}
\sigma_{ij,11}' =  \sum_{k=1}^{n} \sum_{l=1}^{3} \sum_{h=1}^{3}
w_k' ( x_{i,l,k} - \mu_i' ) (x_{j, h,k} - \mu_j' ), 
\end{equation}

\begin{equation}\label{eq:add140-B3D}
\sigma_{ij,12}' =  \sum_{k=1}^{n} \sum_{l=1}^{3} w_k' ( x_{i,l,k}- \mu_i' ) (x_{j,l,k} - \mu_j '),
\end{equation}

\begin{equation}\label{eq:add150-B3D}
\sigma_{ij,21}' =  \sum_{k=1}^{n_a} \sum_{l=1}^{3} \sum_{h=1}^{3}
w_k' ( x_{i,l,k} - \mu_i' ) (x_{j, h,k} - \mu_j' ), 
\end{equation}

\begin{equation}\label{eq:add160-B3D}
\sigma_{ij,22}' = \sum_{k=1}^{n_a} \sum_{l=1}^{3} w_k' ( x_{i,l,k}- \mu_i' ) (x_{j,l,k} - \mu_j '),
\end{equation}

\begin{equation}\label{eq:add170-B3D}
\sigma_{ij,31}' =  \sum_{k=1}^{n_d} \sum_{l=1}^{3} \sum_{h=1}^{3}
w_k' ( x_{i,l,k} - \mu_i' ) (x_{j, h,k} - \mu_j' ), 
\end{equation}

\begin{equation}\label{eq:add180-B3D}
\sigma_{ij,32}' = \sum_{k=1}^{n_d} \sum_{l=1}^{3} w_k' ( x_{i,l,k}- \mu_i' ) (x_{j,l,k} - \mu_j ').
\end{equation}

Plugging-in the values of $\mu_i'$ and $\mu_j'$ in (\ref{eq:add130-B3D}), we obtain:

\begin{equation}\label{eq:add190-B3D}
\begin{array}{lll}

\sigma_{ij,11}'& = & \sum_{k=1}^{n} \sum_{l=1}^{3} \sum_{h=1}^{3}
w_k' ( x_{i,l,k} - \frac{a}{a'} \mu_i - \mu_{i,a} + \mu_{i,d} )
     (x_{j, h,k} - \frac{a}{a'} \mu_j - \mu_{j,a} + \mu_{j,d} ) \vspace{0.3cm} \\
              & = & \sum_{k=1}^{n} \sum_{l=1}^{3} \sum_{h=1}^{3}
                    w_k' ( x_{i,l,k} - \mu_i + \mu_i(1- \frac{a}{a'}) - \mu_{i,a} + \mu_{i,d} ) \vspace{0.3cm} \\
              & &\hspace{3.7cm}(x_{j, h,k} - \mu_j + \mu_j(1-\frac{a}{a'})  - \mu_{j,a} + \mu_{j,d} ) \vspace{0.3cm} \\
              & = & \sum_{k=1}^{n} \sum_{l=1}^{3} \sum_{h=1}^{3}
                      w_k' ( x_{i,l,k} - \mu_i)(x_{j, h,k} - \mu_j) + \vspace{0.3cm} \\
              & &  \sum_{k=1}^{n} \sum_{l=1}^{3} \sum_{h=1}^{3}
                      w_k' ( x_{i,l,k} - \mu_i)(\mu_j(1-\frac{a}{a'})  - \mu_{j,a} + \mu_{j,d} ) + \vspace{0.3cm} \\
              & & \sum_{k=1}^{n} \sum_{l=1}^{3} \sum_{h=1}^{3}
                      w_k' (\mu_i(1- \frac{a}{a'}) - \mu_{i,a} + \mu_{i,d})(x_{j, h,k} - \mu_j) + \vspace{0.3cm} \\
              & & \sum_{k=1}^{n} \sum_{l=1}^{3} \sum_{h=1}^{3}
                      w_k' (\mu_i(1- \frac{a}{a'}) - \mu_{i,a} + \mu_{i,d})(\mu_j(1-\frac{a}{a'})  - \mu_{j,a} + \mu_{j,d}).
\end{array}
\end{equation}

Since $ \sum_{k=1}^{n} \sum_{l=1}^{3} w_k' ( x_{i,l,k} - \mu_i)=0$, $1 \leq i \leq 3$, we have

\begin{equation}\label{eq:add200-B3D}
\begin{array}{lll}
\sigma_{ij,11}' & = & \frac{1}{a'}\sum_{k=1}^{n} \sum_{l=1}^{3} \sum_{h=1}^{3}
                      a_k ( x_{i,l,k} - \mu_i)(x_{j, h,k} - \mu_j) + \vspace{0.3cm} \\
                &   &  \frac{1}{a'} \sum_{k=1}^{n} \sum_{l=1}^{3} \sum_{h=1}^{3}
      a_k (\mu_i(1- \frac{a}{a'}) - \mu_{i,a} + \mu_{i,d})(\mu_j(1-\frac{a}{a'})  - \mu_{j,a} + \mu_{j,d})\vspace{0.3cm} \\
      & = & \frac{1}{a'}\sum_{k=1}^{n} \sum_{l=1}^{3} \sum_{h=1}^{3}
                      a_k ( x_{i,l,k} - \mu_i)(x_{j, h,k} - \mu_j) + \vspace{0.3cm} \\
                &   &  9 \frac{a}{a'}(\mu_i(1- \frac{a}{a'}) - \mu_{i,a} + \mu_{i,d})(\mu_j(1-\frac{a}{a'})  - \mu_{j,a} + \mu_{j,d}).
\end{array}
\end{equation}

Plugging-in the values of $\mu_i'$ and $\mu_j'$ in (\ref{eq:add140-B3D}), we obtain:

\begin{equation}\label{eq:add210-B3D}
\begin{array}{lll}
\sigma_{ij,12}' & = & \sum_{k=1}^{n} \sum_{l=1}^{3}
w_k' ( x_{i,l,k} - \frac{a}{a'} \mu_i - \mu_{i,a} + \mu_{i,d} )
     (x_{j, h,k} - \frac{a}{a'} \mu_j - \mu_{j,a} + \mu_{j,d} ) \vspace{0.3cm} \\
              & = & \sum_{k=1}^{n} \sum_{l=1}^{3}
                    w_k' ( x_{i,l,k} - \mu_i + \mu_i(1- \frac{a}{a'}) - \mu_{i,a} + \mu_{i,d} ) \vspace{0.3cm} \\
              & &\hspace{2.6cm}(x_{j, h,k} - \mu_j + \mu_j(1-\frac{a}{a'})  - \mu_{j,a} + \mu_{j,d} ) \vspace{0.3cm} \\
              & = & \sum_{k=1}^{n} \sum_{l=1}^{3} 
                      w_k' ( x_{i,l,k} - \mu_i)(x_{j, h,k} - \mu_j) + \vspace{0.3cm} \\
              & &  \sum_{k=1}^{n} \sum_{l=1}^{3} 
                      w_k' ( x_{i,l,k} - \mu_i)(\mu_j(1-\frac{a}{a'})  - \mu_{j,a} + \mu_{j,d} ) + \vspace{0.3cm} \\
              & & \sum_{k=1}^{n} \sum_{l=1}^{3} 
                      w_k' (\mu_i(1- \frac{a}{a'}) - \mu_{i,a} + \mu_{i,d})(x_{j, h,k} - \mu_j) + \vspace{0.3cm} \\
              & & \sum_{k=1}^{n} \sum_{l=1}^{3} 
                      w_k' (\mu_i(1- \frac{a}{a'}) - \mu_{i,a} + \mu_{i,d})(\mu_j(1-\frac{a}{a'})  - \mu_{j,a} + \mu_{j,d}).
\end{array}
\end{equation}

Since $ \sum_{k=1}^{n} \sum_{l=1}^{3} w_k' ( x_{i,l,k} - \mu_i)=0$, $1 \leq i \leq 3$, we have

\begin{equation}\label{eq:add220-B3D}
\begin{array}{lll}
\sigma_{ij,12}' & = & \frac{1}{a'}\sum_{k=1}^{n} \sum_{l=1}^{3}
                      a_k ( x_{i,l,k} - \mu_i)(x_{j, h,k} - \mu_j) + \vspace{0.3cm} \\
                &   &  \frac{1}{a'} \sum_{k=1}^{n} \sum_{l=1}^{3}
      a_k (\mu_i(1- \frac{a}{a'}) - \mu_{i,a} + \mu_{i,d})(\mu_j(1-\frac{a}{a'})  - \mu_{j,a} + \mu_{j,d})\vspace{0.3cm} \\
      & = & \frac{1}{a'}\sum_{k=1}^{n} \sum_{l=1}^{3}
                      a_k ( x_{i,l,k} - \mu_i)(x_{j, h,k} - \mu_j) + \vspace{0.3cm} \\
                &   &  3 \frac{a}{a'}(\mu_i(1- \frac{a}{a'}) - \mu_{i,a} + \mu_{i,d})(\mu_j(1-\frac{a}{a'})  - \mu_{j,a} + \mu_{j,d}).
\end{array}
\end{equation}

From (\ref{eq:add210-B3D}) and (\ref{eq:add220-B3D}), we obtain

\begin{equation}\label{eq:add230-B3D}
\begin{array}{lll}
\sigma_{ij,1}' & = & \sigma_{ij,11}' + \sigma_{ij,12}' \vspace{0.3cm}\\
               & = & \sigma_{ij} + 12 \frac{a}{a'}(\mu_i(1- \frac{a}{a'}) - \mu_{i,a} + \mu_{i,d})(\mu_j(1-\frac{a}{a'})  - \mu_{j,a} + \mu_{j,d}).
      \end{array}
\end{equation}

Note that $\sigma_{ij,1}'$ can be computed in $O(1)$ time.
The components $\sigma_{ij,21}'$ and $\sigma_{ij,22}'$ can be computed in $O(n_a)$ time,
while $O(n_d)$ time is needed for computing $\sigma_{ij,31}'$ and  $\sigma_{ij,32}'$.
Thus, $\vec{\mu'}$ and

\begin{equation}\label{eq:add240-B3D}
\begin{array}{lll}
\sigma_{ij}' & = & \frac{1}{12}(\sigma_{ij,11}' + \sigma_{ij,12}' + \sigma_{ij,21}' + \sigma_{ij,22}' +
                   \sigma_{ij,31}' + \sigma_{ij,32}') \vspace{0.3cm}\\
               & = & \frac{1}{12}(\sigma_{ij} + \sigma_{ij,21}' + \sigma_{ij,22}' + \sigma_{ij,31}' + \sigma_{ij,32}) +  \vspace{0.3cm}\\ 
               & & \frac{a}{a'}(\mu_i(1- \frac{a}{a'}) - \mu_{i,a} + \mu_{i,d})(\mu_j(1-\frac{a}{a'})  - \mu_{j,a} + \mu_{j,d}).
\end{array}
\end{equation}
can be computed in $O(n_a + n_d)$ time.

\bigskip
\noindent
{\bf {Deleting points}}
\label{subsec:deleting-point20-B3D}
\bigskip

Let the new convex hull be obtained by deleting
$n_d$ tetrahedra from and added $n_a$ tetrahedra to the old convex hull. 
Consequently,
the same formulas and time complexity, as by adding points, follow.

\subsection{Continuous PCA in $\mathbb{R}^2$}
\label{subsec:cpca-R3-2}

\subsubsection{Continuous PCA over a polygon}
\label{subsec:cpca-polygon}

We assume that the polygon $X$ is triangulated (if it is not, we can triangulate it in preprocessing), and the number of triangles is $n$.
The $k$-th triangle, with vertices 
${\vec{x}_{1,k}}, {\vec{x}_{2,k}}, {\vec{x}_{3,k}} = \vec{o}$, 
can be represented in a parametric form by
$
{\vec{T}_i}(s,t)={\vec{x}_{3,k} +
s \, (\vec{x}_{1,k}} - \vec{x}_{3,k}) +
t \, ( {\vec{x}_{2,k}} - \vec{x}_{3,k} ),
$ 
for $\;$ $0 \leq s, t \leq 1$,
and $s+t \leq 1$.

The center of gravity of the $k$-th triangle is
$$
\vec{\mu}_i=\frac{\int_0^1 \int_0^{1-s} {\vec{T}_i}(s,t) \,d t \, d s}
{\int_0^1 \int_0^{1-s}  \, d t \, d s} =
\frac{{\vec{x}_{1,k}}+ {\vec{x}_{2,k}} + {\vec{x}_{3,k}}}{3}.
$$
The contribution of each triangle to the center of gravity of $X$
is proportional to its area. 
The area of the $i$-th triangle is
$$
a_k=\mbox{area}(T_k) =
\frac{
|( {\vec{x}_{2,k}} - {\vec{x}_{1,k}} )| \times
|( {\vec{x}_{3,k}} - {\vec{x}_{1,k}} )|}{2},
$$
where $\times$ denotes the vector product. We introduce a weight to each
triangle that is proportional with its area, define as
$$
w_k = \frac{a_k}{\sum_{k=1}^{n} a_k} = \frac{a_k}{a},
$$
where $a$ is the area of $X$.Then, the center of gravity of $X$ is
$$
\vec{\mu}  = 
\sum_{k=1}^{n} w_k  \vec{\mu}_k.
$$
The covariance matrix of the $k$-th triangle is
$$
\begin{array}{lll}
\Sigma_k & = &
\frac{\int_0^1 \int_0^{1-s} {(\vec{T}_k}(s,t) - \vec{\mu}) \,
(\vec{T}_k(s,t) - \vec{\mu})^T \, d t \, d s}
{\int_0^1 \int_0^{1-s} \, d t \, d s} \\
& = & 
\frac{1}{12} \Big( \sum_{j=1}^{3} \sum_{h=1}^{3}
( {\vec{x}_{j,k}} - \vec{\mu} ) {({\vec{x}_{h,k}} - \vec{\mu} )}^T + \\
& & \quad \;\;\; \sum_{j=1}^{3} ( {\vec{x}_{j,k}} - \vec{\mu} ) {({\vec{x}_{j,k}} - \vec{\mu} )}^T \Big).
\end{array}
$$
The $(i,j)$-th element of $\Sigma_k$, $i, j\in \{1, 2\}$, is 
$$
\begin{array}{lll}
\sigma_{ij, k} & = &
 \frac{1}{12} \Big( \sum_{l=1}^{3} \sum_{h=1}^{3}
( x_{i, l, k} - \mu_i ) (x_{j, h, k} - \mu_j ) 
+ \\
& & \quad \;\;\;\, \sum_{l=1}^{3} ( x_{i, l, k} - \mu_i ) (x_{j, l, k} - \mu_j ) \Big),
\end{array}
$$
with $\vec{\mu}=(\mu_1, \mu_2 )$.
The covariance matrix of $X$ is
$$
\begin{array}{lll}
\Sigma & = &
\sum_{k=1}^{n} w_k \Sigma_k.
\end{array}
$$

\bigskip
\noindent
{\bf Adding points}
\label{subsec:adding-point20-2D}

\bigskip

We add points to $X$.
Let $X'$ be the new convex hull.
We assume that $X'$ is  obtained from $X$ by deleting $n_d$, and
adding $n_a$ triangles.
Then the sum of the areas of all triangles is
$$
a' = \sum_{k=1}^{n} a_k + \sum_{k=1}^{n_a} a_k -\sum_{k=1}^{n_d} a_k
   = a + \sum_{k=1}^{n_a} a_k -\sum_{k=1}^{n_d} a_k.
$$

The center of gravity of $X'$ is
\begin{equation}\label{eq:add100-2D}
\begin{array}{lll}
\vec{\mu}'  & = & \sum_{k=1}^{n}w_k'\vec{\mu}_k +
            \sum_{k=1}^{n_a}w_k'\vec{\mu}_k -
            \sum_{k=1}^{n_d}w_k'\vec{\mu}_k \vspace{0.3cm} \\
      & = & \frac{1}{a'}\left( \sum_{k=1}^{n}a_k\vec{\mu}_k +
            \sum_{k=1}^{n_a}a_k\vec{\mu}_k -
            \sum_{k=1}^{n_d}a_k\vec{\mu}_k \right) \vspace{0.3cm} \\
      & = & \frac{1}{a'}\left( a \vec{\mu} +
            \sum_{k=1}^{n_a}a_k\vec{\mu}_k -
            \sum_{k=1}^{n_d}a_k\vec{\mu}_k\right).
\end{array}
\end{equation}

Let
$$
\vec{\mu}_a = \frac{1}{a'}\sum_{k=1}^{n_a} a_k\vec{\mu}_k, \;\;\;\text{and}\;\;\;
\vec{\mu}_d = \frac{1}{a'}\sum_{k=1}^{n_d} a_k\vec{\mu}_k.
$$

Then, we can rewrite (\ref{eq:add100-2D}) as
\begin{equation}\label{eq:add110-2D}
\vec{\mu}'  =  \frac{a}{a'} \vec{\mu} + \vec{\mu}_a - \vec{\mu}_d.
\end{equation}

The $i$-th component of $\vec{\mu}_a$ and $\vec{\mu}_d$, $1 \leq i \leq 2$,
is denoted by $\mu_{i,a}$ and $\mu_{i,d}$, respectively.
The $(i,j)$-th component, $\sigma_{ij}'$, $1 \leq i, j \leq 2$, of the covariance matrix 
$\Sigma'$ of $X'$ is
$$
\begin{array}{lll}
\sigma_{ij}' & = &
 \frac{1}{12} \Big( \sum_{k=1}^{n} \sum_{l=1}^{3} \sum_{h=1}^{3}
w_k' ( x_{i,l,k} - \mu_i' ) (x_{j, h,k} - \mu_j' ) + \vspace{0.3cm}\\
& & 
\hspace{0.75cm} \sum_{k=1}^{n} \sum_{l=1}^{3} w_k' ( x_{i,l,k}- \mu_i' ) (x_{j,l,k} - \mu_j ') \Big) + \vspace{0.3cm}\\
& &
\frac{1}{12} \Big( \sum_{k=1}^{n_a} \sum_{l=1}^{3} \sum_{h=1}^{3}
w_k' ( x_{i,l,k} - \mu_i' ) (x_{j, h,k} - \mu_j' ) + \vspace{0.3cm}\\
& & 
\hspace{0.75cm} \sum_{k=1}^{n_a} \sum_{l=1}^{3} w_k' ( x_{i,l,k}- \mu_i' ) (x_{j,l,k} - \mu_j ') - \vspace{0.3cm}\\
& &
\hspace{0.75cm}\sum_{k=1}^{n_d} \sum_{l=1}^{3} \sum_{h=1}^{3}
w_k' ( x_{i,l,k} - \mu_i' ) (x_{j, h,k} - \mu_j' ) - \vspace{0.3cm}\\
& & 
\hspace{0.75cm} \sum_{k=1}^{n_d} \sum_{l=1}^{3} w_k' ( x_{i,l,k}- \mu_i' ) (x_{j,l,k} - \mu_j ')\Big).
\end{array}
$$

Let

$$
\sigma_{ij}' = \frac{1}{12} (\sigma_{ij,11}' + \sigma_{ij,12}' + \sigma_{ij,21}' + \sigma_{ij,22}'
               - \sigma_{ij,31}'- \sigma_{ij,32}'),
$$

where,
\begin{equation}\label{eq:add130-2D}
\sigma_{ij,11}' =  \sum_{k=1}^{n} \sum_{l=1}^{3} \sum_{h=1}^{3}
w_k' ( x_{i,l,k} - \mu_i' ) (x_{j, h,k} - \mu_j' ), 
\end{equation}

\begin{equation}\label{eq:add140-2D}
\sigma_{ij,12}' =  \sum_{k=1}^{n} \sum_{l=1}^{3} w_k' ( x_{i,l,k}- \mu_i' ) (x_{j,l,k} - \mu_j '),
\end{equation}

\begin{equation}\label{eq:add150-2D}
\sigma_{ij,21}' =  \sum_{k=1}^{n_a} \sum_{l=1}^{3} \sum_{h=1}^{3}
w_k' ( x_{i,l,k} - \mu_i' ) (x_{j, h,k} - \mu_j' ), 
\end{equation}

\begin{equation}\label{eq:add160-2D}
\sigma_{ij,22}' = \sum_{k=1}^{n_a} \sum_{l=1}^{3} w_k' ( x_{i,l,k}- \mu_i' ) (x_{j,l,k} - \mu_j '),
\end{equation}

\begin{equation}\label{eq:add170-2D}
\sigma_{ij,31}' =  \sum_{k=1}^{n_d} \sum_{l=1}^{3} \sum_{h=1}^{3}
w_k' ( x_{i,l,k} - \mu_i' ) (x_{j, h,k} - \mu_j' ), 
\end{equation}

\begin{equation}\label{eq:add180-2D}
\sigma_{ij,32}' = \sum_{k=1}^{n_d} \sum_{l=1}^{3} w_k' ( x_{i,l,k}- \mu_i' ) (x_{j,l,k} - \mu_j ').
\end{equation}

Plugging-in the values of $\mu_i'$ and $\mu_j'$ in (\ref{eq:add130-2D}), we obtain:

\begin{equation}\label{eq:add190-2D}
\begin{array}{lll}

\sigma_{ij,11}'& = & \sum_{k=1}^{n} \sum_{l=1}^{3} \sum_{h=1}^{3}
w_k' ( x_{i,l,k} - \frac{a}{a'} \mu_i - \mu_{i,a} + \mu_{i,d} )
     (x_{j, h,k} - \frac{a}{a'} \mu_j - \mu_{j,a} + \mu_{j,d} ) \vspace{0.3cm} \\
              & = & \sum_{k=1}^{n} \sum_{l=1}^{3} \sum_{h=1}^{3}
                    w_k' ( x_{i,l,k} - \mu_i + \mu_i(1- \frac{a}{a'}) - \mu_{i,a} + \mu_{i,d} ) \vspace{0.3cm} \\
              & &\hspace{3.7cm}(x_{j, h,k} - \mu_j + \mu_j(1-\frac{a}{a'})  - \mu_{j,a} + \mu_{j,d} ) \vspace{0.3cm} \\
              & = & \sum_{k=1}^{n} \sum_{l=1}^{3} \sum_{h=1}^{3}
                      w_k' ( x_{i,l,k} - \mu_i)(x_{j, h,k} - \mu_j) + \vspace{0.3cm} \\
              & &  \sum_{k=1}^{n} \sum_{l=1}^{3} \sum_{h=1}^{3}
                      w_k' ( x_{i,l,k} - \mu_i)(\mu_j(1-\frac{a}{a'})  - \mu_{j,a} + \mu_{j,d} ) + \vspace{0.3cm} \\
              & & \sum_{k=1}^{n} \sum_{l=1}^{3} \sum_{h=1}^{3}
                      w_k' (\mu_i(1- \frac{a}{a'}) - \mu_{i,a} + \mu_{i,d})(x_{j, h,k} - \mu_j) + \vspace{0.3cm} \\
              & & \sum_{k=1}^{n} \sum_{l=1}^{3} \sum_{h=1}^{3}
                      w_k' (\mu_i(1- \frac{a}{a'}) - \mu_{i,a} + \mu_{i,d})(\mu_j(1-\frac{a}{a'})  - \mu_{j,a} + \mu_{j,d}).
\end{array}
\end{equation}

Since $ \sum_{k=1}^{n} \sum_{l=1}^{3} w_k' ( x_{i,l,k} - \mu_i)=0$, $1 \leq i \leq 2$, we have

\begin{equation}\label{eq:add200-2D}
\begin{array}{lll}
\sigma_{ij,11}' & = & \frac{1}{a'}\sum_{k=1}^{n} \sum_{l=1}^{3} \sum_{h=1}^{3}
                      a_k ( x_{i,l,k} - \mu_i)(x_{j, h,k} - \mu_j) + \vspace{0.3cm} \\
                &   &  \frac{1}{a'} \sum_{k=1}^{n} \sum_{l=1}^{3} \sum_{h=1}^{3}
      a_k (\mu_i(1- \frac{a}{a'}) - \mu_{i,a} + \mu_{i,d})(\mu_j(1-\frac{a}{a'})  - \mu_{j,a} + \mu_{j,d})\vspace{0.3cm} \\
      & = & \frac{1}{a'}\sum_{k=1}^{n} \sum_{l=1}^{3} \sum_{h=1}^{3}
                      a_k ( x_{i,l,k} - \mu_i)(x_{j, h,k} - \mu_j) + \vspace{0.3cm} \\
                &   &  9 \frac{a}{a'}(\mu_i(1- \frac{a}{a'}) - \mu_{i,a} + \mu_{i,d})(\mu_j(1-\frac{a}{a'})  - \mu_{j,a} + \mu_{j,d}).
\end{array}
\end{equation}

Plugging-in the values of $\mu_i'$ and $\mu_j'$ in (\ref{eq:add140-2D}), we obtain:

\begin{equation}\label{eq:add210-2D}
\begin{array}{lll}
\sigma_{ij,12}' & = & \sum_{k=1}^{n} \sum_{l=1}^{3}
w_k' ( x_{i,l,k} - \frac{a}{a'} \mu_i - \mu_{i,a} + \mu_{i,d} )
     (x_{j, h,k} - \frac{a}{a'} \mu_j - \mu_{j,a} + \mu_{j,d} ) \vspace{0.3cm} \\
              & = & \sum_{k=1}^{n} \sum_{l=1}^{3}
                    w_k' ( x_{i,l,k} - \mu_i + \mu_i(1- \frac{a}{a'}) - \mu_{i,a} + \mu_{i,d} ) \vspace{0.3cm} \\
              & &\hspace{2.6cm}(x_{j, h,k} - \mu_j + \mu_j(1-\frac{a}{a'})  - \mu_{j,a} + \mu_{j,d} ) \vspace{0.3cm} \\
              & = & \sum_{k=1}^{n} \sum_{l=1}^{3} 
                      w_k' ( x_{i,l,k} - \mu_i)(x_{j, h,k} - \mu_j) + \vspace{0.3cm} \\
              & &  \sum_{k=1}^{n} \sum_{l=1}^{3} 
                      w_k' ( x_{i,l,k} - \mu_i)(\mu_j(1-\frac{a}{a'})  - \mu_{j,a} + \mu_{j,d} ) + \vspace{0.3cm} \\
              & & \sum_{k=1}^{n} \sum_{l=1}^{3} 
                      w_k' (\mu_i(1- \frac{a}{a'}) - \mu_{i,a} + \mu_{i,d})(x_{j, h,k} - \mu_j) + \vspace{0.3cm} \\
              & & \sum_{k=1}^{n} \sum_{l=1}^{3} 
                      w_k' (\mu_i(1- \frac{a}{a'}) - \mu_{i,a} + \mu_{i,d})(\mu_j(1-\frac{a}{a'})  - \mu_{j,a} + \mu_{j,d}).
\end{array}
\end{equation}

Since $ \sum_{k=1}^{n} \sum_{l=1}^{3} w_k' ( x_{i,l,k} - \mu_i)=0$, $1 \leq i \leq 2$, we have

\begin{equation}\label{eq:add220-2D}
\begin{array}{lll}
\sigma_{ij,12}' & = & \frac{1}{a'}\sum_{k=1}^{n} \sum_{l=1}^{3}
                      a_k ( x_{i,l,k} - \mu_i)(x_{j, h,k} - \mu_j) + \vspace{0.3cm} \\
                &   &  \frac{1}{a'} \sum_{k=1}^{n} \sum_{l=1}^{3}
      a_k (\mu_i(1- \frac{a}{a'}) - \mu_{i,a} + \mu_{i,d})(\mu_j(1-\frac{a}{a'})  - \mu_{j,a} + \mu_{j,d})\vspace{0.3cm} \\
      & = & \frac{1}{a'}\sum_{k=1}^{n} \sum_{l=1}^{3}
                      a_k ( x_{i,l,k} - \mu_i)(x_{j, h,k} - \mu_j) + \vspace{0.3cm} \\
                &   &  3 \frac{a}{a'}(\mu_i(1- \frac{a}{a'}) - \mu_{i,a} + \mu_{i,d})(\mu_j(1-\frac{a}{a'})  - \mu_{j,a} + \mu_{j,d}).
\end{array}
\end{equation}

From (\ref{eq:add210-2D}) and (\ref{eq:add220-2D}), we obtain

\begin{equation}\label{eq:add230-2D}
\begin{array}{lll}
\sigma_{ij,1}' & = & \sigma_{ij,11}' + \sigma_{ij,12}' \vspace{0.3cm}\\
               & = & \sigma_{ij} + 12 \frac{a}{a'}(\mu_i(1- \frac{a}{a'}) - \mu_{i,a} + \mu_{i,d})(\mu_j(1-\frac{a}{a'})  - \mu_{j,a} + \mu_{j,d}).
      \end{array}
\end{equation}

Note that $\sigma_{ij,1}'$ can be computed in $O(1)$ time.
The components $\sigma_{ij,21}'$ and $\sigma_{ij,22}'$ can be computed in $O(n_a)$ time,
while $O(n_d)$ time is needed for computing $\sigma_{ij,31}'$ and  $\sigma_{ij,32}'$.
Thus, $\vec{\mu'}$ and

\begin{equation}\label{eq:add240-2D}
\begin{array}{lll}
\sigma_{ij}' & = & \frac{1}{12}(\sigma_{ij,11}' + \sigma_{ij,12}' + \sigma_{ij,21}' + \sigma_{ij,22}' +
                   \sigma_{ij,31}' + \sigma_{ij,32}') \vspace{0.3cm}\\
               & = & \frac{1}{12}(\sigma_{ij} + \sigma_{ij,21}' + \sigma_{ij,22}' + \sigma_{ij,31}' + \sigma_{ij,32}) +  \vspace{0.3cm}\\ 
               & & \frac{a}{a'}(\mu_i(1- \frac{a}{a'}) - \mu_{i,a} + \mu_{i,d})(\mu_j(1-\frac{a}{a'})  - \mu_{j,a} + \mu_{j,d}).
\end{array}
\end{equation}

can be computed in $O(n_a + n_d)$ time.

\bigskip
\noindent
{\bf {Deleting points}}
\label{subsec:deleting-point20-2D}
\bigskip

Let the new convex hull be obtained by deleting
$n_d$ tetrahedra from and added $n_a$ tetrahedra to the old convex hull. 
If the interior point $\vec{o}$, after deleting points, lies inside the new convex hull, then
the same formulas and time complexity, as by adding points, follow.
However, $\vec{o}$ could lie outside the new convex hull. Then, we need
to choose a new interior point $\vec{o}'$, and recompute the new tetrahedra associated with it.
Thus, we need in total $O(n)$ time to update the principal components.

\subsubsection{Continuous PCA over the boundary of a polygon}
\label{subsec:cpca-polygon-boundary}

Let $X$ be a polygon in $\mathbb{R}^2$.
We assume that the boundary of $X$ is comprised of 
$n$ line segments.
The $k$-th line segment, with vertices 
${\vec{x}_{1,k}}, {\vec{x}_{2,k}}$, can be represented in a parametric form
by
$$
{\vec{S}_k}(t)={\vec{x}_{1,k}} + 
t \, ( {\vec{x}_{2,k}} - {\vec{x}_{1,k}} ). 
$$
Since we assume that the mass density is constant,
the center of gravity of the $k$-th line segment is
$$
\vec{\mu}_k=\frac{\int_0^1  {\vec{S}_k}(t) \, d t}
{\int_0^1  \, d t} =
\frac{{\vec{x}_{1,k}}+ {\vec{x}_{2,k}}}{2}.
$$
The contribution of each line segment to the center of gravity of the boundary of
a polygon is proportional with the length of the line segment.
The length of the $k$-th line segment is
$$
s_k={\mbox{length}}(S_k) =
|| {\vec{x}_{2,k}} - {\vec{x}_{1,k}} ||.
$$
We introduce a weight to each
line segment that is proportional with its length, define as
$$
w_k = \frac{s_k}{\sum_{k=1}^{n} s_k} = \frac{s_k}{s},
$$
where $s$ is the perimeter of $X$. Then, the center of gravity of the boundary of $X$ is
$$
\vec{\mu}  = 
\sum_{k=1}^{n} w_k  \vec{\mu}_k.
$$
The covariance matrix of the $k$-th line segment is
$$
\begin{array}{lll}
\Sigma_k & = &
\frac{\int_0^1  {(\vec{S}_k}(t) - \vec{\mu}) \,
(\vec{S}_k(t) - \vec{\mu})^T \, d t}
{\int_0^1 \, d t} \\
& = & 
\frac{1}{6} \Big( \sum_{j=1}^{2} \sum_{h=1}^{2}
( {\vec{x}_{j,k}} - \vec{\mu} ) {({\vec{x}_{h,k}} - \vec{\mu} )}^T + \\
&  &  \quad \;\, \sum_{j=1}^{2} ( {\vec{x}_{j,k}} - \vec{\mu} ) {({\vec{x}_{j,k}} - \vec{\mu} )}^T \Big).
\end{array}
$$

The $(i,j)$-th element of $\Sigma_k$, $i, j\in \{1, 2\}$, is 
$$
\begin{array}{lll}
\sigma_{ij, k} & = &
 \frac{1}{6} \Big( \sum_{l=1}^{2} \sum_{h=1}^{2}
( x_{i, l, k} - \mu_i ) (x_{j, h, k} - \mu_j ) 
+ \\
& & \quad \;\;\;\, \sum_{l=1}^{2} ( x_{i, l, k} - \mu_i ) (x_{j, l, k} - \mu_j ) \Big),
\end{array}
$$
with $\vec{\mu}=(\mu_1, \mu_2 )$.

The covariance matrix of the boundary of $X$ is
$$
\begin{array}{lll}
\Sigma & = &
\sum_{k=1}^{n} w_k \Sigma_k.
\end{array}
$$

\newpage
\noindent
{\bf Adding points}
\label{subsec:adding-point20-B2D}

\bigskip

We add points to $X$.
Let $X'$ be the new convex hull.
We assume that $X'$ is obtained from $X$ by deleting $n_d$, and
adding $n_a$ line segments.
Then the sum of the lengths of all line segments is
$$
s' = \sum_{k=1}^{n} l_k + \sum_{k=1}^{n_a} s_k -\sum_{k=1}^{n_d} s_k
   = s + \sum_{k=1}^{n_a} s_k -\sum_{k=1}^{n_d} s_k.
$$

The center of gravity of $X'$ is
\begin{equation}\label{eq:add100-B2D}
\begin{array}{lll}
\vec{\mu}'  & = & \sum_{k=1}^{n}w_k'\vec{\mu}_k +
            \sum_{k=1}^{n_a}w_k'\vec{\mu}_k -
            \sum_{k=1}^{n_d}w_k'\vec{\mu}_k \vspace{0.3cm} \\
      & = & \frac{1}{s'}\left( \sum_{k=1}^{n}s_k\vec{\mu}_k +
            \sum_{k=1}^{n_a}s_k\vec{\mu}_k -
            \sum_{k=1}^{n_d}s_k\vec{\mu}_k \right) \vspace{0.3cm} \\
      & = & \frac{1}{s'}\left( s \vec{\mu} +
            \sum_{k=1}^{n_a}s_k\vec{\mu}_k -
            \sum_{k=1}^{n_d} s_k\vec{\mu}_k\right).
\end{array}
\end{equation}

Let
$$
\vec{\mu}_a = \frac{1}{s'}\sum_{k=1}^{n_a} s_k\vec{\mu}_k, \;\;\;\text{and}\;\;\;
\vec{\mu}_d = \frac{1}{s'}\sum_{k=1}^{n_d} s_k\vec{\mu}_k.
$$

Then, we can rewrite (\ref{eq:add100-B2D}) as
\begin{equation}\label{eq:add110-B2D}
\vec{\mu}'  =  \frac{s}{s'} \vec{\mu} + \vec{\mu}_a - \vec{\mu}_d.
\end{equation}

The $i$-th component of $\vec{\mu}_a$ and $\vec{\mu}_d$, $1 \leq i \leq 2$,
is denoted by $\mu_{i,a}$ and $\mu_{i,d}$, respectively.
The $(i,j)$-th component, $\sigma_{ij}'$, $1 \leq i, j \leq 2$, of the covariance matrix 
$\Sigma'$ of $X'$ is
$$
\begin{array}{lll}
\sigma_{ij}' & = &
 \frac{1}{6} \Big( \sum_{k=1}^{n} \sum_{l=1}^{2} \sum_{h=1}^{2}
w_k' ( x_{i,l,k} - \mu_i' ) (x_{j, h,k} - \mu_j' ) + \vspace{0.3cm}\\
& & 
\hspace{0.75cm} \sum_{k=1}^{n} \sum_{l=1}^{2} w_k' ( x_{i,l,k}- \mu_i' ) (x_{j,l,k} - \mu_j ') \Big) + \vspace{0.3cm}\\
& &
\frac{1}{6} \Big( \sum_{k=1}^{n_a} \sum_{l=1}^{2} \sum_{h=1}^{2}
w_k' ( x_{i,l,k} - \mu_i' ) (x_{j, h,k} - \mu_j' ) + \vspace{0.3cm}\\
& & 
\hspace{0.75cm} \sum_{k=1}^{n_a} \sum_{l=1}^{2} w_k' ( x_{i,l,k}- \mu_i' ) (x_{j,l,k} - \mu_j ') - \vspace{0.3cm}\\
& &
\hspace{0.75cm}\sum_{k=1}^{n_d} \sum_{l=1}^{2} \sum_{h=1}^{2}
w_k' ( x_{i,l,k} - \mu_i' ) (x_{j, h,k} - \mu_j' ) - \vspace{0.3cm}\\
& & 
\hspace{0.75cm} \sum_{k=1}^{n_d} \sum_{l=1}^{2} w_k' ( x_{i,l,k}- \mu_i' ) (x_{j,l,k} - \mu_j ')\Big).
\end{array}
$$

Let

$$
\sigma_{ij}' = \frac{1}{6} (\sigma_{ij,11}' + \sigma_{ij,12}' + \sigma_{ij,21}'+ \sigma_{ij,22}'
               - \sigma_{ij,31}'- \sigma_{ij,32}'),
$$

where,
\begin{equation}\label{eq:add130-B2D}
\sigma_{ij,11}' =  \sum_{k=1}^{n} \sum_{l=1}^{2} \sum_{h=1}^{2}
w_k' ( x_{i,l,k} - \mu_i' ) (x_{j, h,k} - \mu_j' ), 
\end{equation}

\begin{equation}\label{eq:add140-B2D}
\sigma_{ij,12}' =  \sum_{k=1}^{n} \sum_{l=1}^{2} w_k' ( x_{i,l,k}- \mu_i' ) (x_{j,l,k} - \mu_j '),
\end{equation}

\begin{equation}\label{eq:add150-B2D}
\sigma_{ij,21}' =  \sum_{k=1}^{n_a} \sum_{l=1}^{2} \sum_{h=1}^{2}
w_k' ( x_{i,l,k} - \mu_i' ) (x_{j, h,k} - \mu_j' ), 
\end{equation}

\begin{equation}\label{eq:add160-B2D}
\sigma_{ij,22}' = \sum_{k=1}^{n_a} \sum_{l=1}^{2} w_k' ( x_{i,l,k}- \mu_i' ) (x_{j,l,k} - \mu_j '),
\end{equation}

\begin{equation}\label{eq:add170-B2D}
\sigma_{ij,31}' =  \sum_{k=1}^{n_d} \sum_{l=1}^{2} \sum_{h=1}^{2}
w_k' ( x_{i,l,k} - \mu_i' ) (x_{j, h,k} - \mu_j' ), 
\end{equation}

\begin{equation}\label{eq:add180-B2D}
\sigma_{ij,32}' = \sum_{k=1}^{n_d} \sum_{l=1}^{3} w_k' ( x_{i,l,k}- \mu_i' ) (x_{j,l,k} - \mu_j ').
\end{equation}

Plugging-in the values of $\mu_i'$ and $\mu_j'$ in (\ref{eq:add130-B2D}), we obtain:

\begin{equation}\label{eq:add190-B2D}
\begin{array}{lll}

\sigma_{ij,11}'& = & \sum_{k=1}^{n} \sum_{l=1}^{2} \sum_{h=1}^{2}
w_k' ( x_{i,l,k} - \frac{s}{s'} \mu_i - \mu_{i,a} + \mu_{i,d} )
     (x_{j, h,k} - \frac{s}{s'} \mu_j - \mu_{j,a} + \mu_{j,d} ) \vspace{0.3cm} \\
              & = & \sum_{k=1}^{n} \sum_{l=1}^{2} \sum_{h=1}^{2}
                    w_k' ( x_{i,l,k} - \mu_i + \mu_i(1- \frac{s}{s'}) - \mu_{i,a} + \mu_{i,d} ) \vspace{0.3cm} \\
              & &\hspace{3.7cm}(x_{j, h,k} - \mu_j + \mu_j(1-\frac{s}{s'})  - \mu_{j,a} + \mu_{j,d} ) \vspace{0.3cm} \\
              & = & \sum_{k=1}^{n} \sum_{l=1}^{2} \sum_{h=1}^{2}
                      w_k' ( x_{i,l,k} - \mu_i)(x_{j, h,k} - \mu_j) + \vspace{0.3cm} \\
              & &  \sum_{k=1}^{n} \sum_{l=1}^{2} \sum_{h=1}^{2}
                      w_k' ( x_{i,l,k} - \mu_i)(\mu_j(1-\frac{s}{s'})  - \mu_{j,a} + \mu_{j,d} ) + \vspace{0.3cm} \\
              & & \sum_{k=1}^{n} \sum_{l=1}^{2} \sum_{h=1}^{2}
                      w_k' (\mu_i(1- \frac{s}{s'}) - \mu_{i,a} + \mu_{i,d})(x_{j, h,k} - \mu_j) + \vspace{0.3cm} \\
              & & \sum_{k=1}^{n} \sum_{l=1}^{2} \sum_{h=1}^{2}
                      w_k' (\mu_i(1- \frac{s}{s'}) - \mu_{i,a} + \mu_{i,d})(\mu_j(1-\frac{s}{s'})  - \mu_{j,a} + \mu_{j,d}).
\end{array}
\end{equation}

Since $ \sum_{k=1}^{n} \sum_{l=1}^{2} w_k' ( x_{i,l,k} - \mu_i)=0$, $1 \leq i \leq 2$, we have

\begin{equation}\label{eq:add200-B2D}
\begin{array}{lll}
\sigma_{ij,11}' & = & \frac{1}{s'}\sum_{k=1}^{n} \sum_{l=1}^{2} \sum_{h=1}^{2}
                      s_k ( x_{i,l,k} - \mu_i)(x_{j, h,k} - \mu_j) + \vspace{0.3cm} \\
                &   &  \frac{1}{s'} \sum_{k=1}^{n} \sum_{l=1}^{2} \sum_{h=1}^{2}
      s_k (\mu_i(1- \frac{s}{s'}) - \mu_{i,a} + \mu_{i,d})(\mu_j(1-\frac{s}{s'})  - \mu_{j,a} + \mu_{j,d})\vspace{0.3cm} \\
      & = & \frac{1}{s'}\sum_{k=1}^{n} \sum_{l=1}^{2} \sum_{h=1}^{2}
                      s_k ( x_{i,l,k} - \mu_i)(x_{j, h,k} - \mu_j) + \vspace{0.3cm} \\
                &   &  4 \frac{s}{s'}(\mu_i(1- \frac{s}{s'}) - \mu_{i,a} + \mu_{i,d})(\mu_j(1-\frac{s}{s'})  - \mu_{j,a} + \mu_{j,d}).
\end{array}
\end{equation}

Plugging-in the values of $\mu_i'$ and $\mu_j'$ in (\ref{eq:add140-B2D}), we obtain:

\begin{equation}\label{eq:add210-B2D}
\begin{array}{lll}
\sigma_{ij,12}' & = & \sum_{k=1}^{n} \sum_{l=1}^{2}
w_k' ( x_{i,l,k} - \frac{s}{s'} \mu_i - \mu_{i,a} + \mu_{i,d} )
     (x_{j, h,k} - \frac{s}{s'} \mu_j - \mu_{j,a} + \mu_{j,d} ) \vspace{0.3cm} \\
              & = & \sum_{k=1}^{n} \sum_{l=1}^{2}
                    w_k' ( x_{i,l,k} - \mu_i + \mu_i(1- \frac{s}{s'}) - \mu_{i,a} + \mu_{i,d} ) \vspace{0.3cm} \\
              & &\hspace{2.6cm}(x_{j, h,k} - \mu_j + \mu_j(1-\frac{s}{s'})  - \mu_{j,a} + \mu_{j,d} ) \vspace{0.3cm} \\
              & = & \sum_{k=1}^{n} \sum_{l=1}^{2} 
                      w_k' ( x_{i,l,k} - \mu_i)(x_{j, h,k} - \mu_j) + \vspace{0.3cm} \\
              & &  \sum_{k=1}^{n} \sum_{l=1}^{2} 
                      w_k' ( x_{i,l,k} - \mu_i)(\mu_j(1-\frac{s}{s'})  - \mu_{j,a} + \mu_{j,d} ) + \vspace{0.3cm} \\
              & & \sum_{k=1}^{n} \sum_{l=1}^{2} 
                      w_k' (\mu_i(1- \frac{s}{s'}) - \mu_{i,a} + \mu_{i,d})(x_{j, h,k} - \mu_j) + \vspace{0.3cm} \\
              & & \sum_{k=1}^{n} \sum_{l=1}^{2} 
                      w_k' (\mu_i(1- \frac{s}{s'}) - \mu_{i,a} + \mu_{i,d})(\mu_j(1-\frac{s}{s'})  - \mu_{j,a} + \mu_{j,d}).
\end{array}
\end{equation}

Since $ \sum_{k=1}^{n} \sum_{l=1}^{2} w_k' ( x_{i,l,k} - \mu_i)=0$, $1 \leq i \leq 2$, we have

\begin{equation}\label{eq:add220-B2D}
\begin{array}{lll}
\sigma_{ij,12}' & = & \frac{1}{s'}\sum_{k=1}^{n} \sum_{l=1}^{2}
                      s_k ( x_{i,l,k} - \mu_i)(x_{j, h,k} - \mu_j) + \vspace{0.3cm} \\
                &   &  \frac{1}{s'} \sum_{k=1}^{n} \sum_{l=1}^{2}
      s_k (\mu_i(1- \frac{s}{s'}) - \mu_{i,a} + \mu_{i,d})(\mu_j(1-\frac{s}{s'})  - \mu_{j,a} + \mu_{j,d})\vspace{0.3cm} \\
      & = & \frac{1}{s'}\sum_{k=1}^{n} \sum_{l=1}^{2}
                      s_k ( x_{i,l,k} - \mu_i)(x_{j, h,k} - \mu_j) + \vspace{0.3cm} \\
                &   &  2 \frac{s}{s'}(\mu_i(1- \frac{s}{s'}) - \mu_{i,a} + \mu_{i,d})(\mu_j(1-\frac{s}{s'})  - \mu_{j,a} + \mu_{j,d}).
\end{array}
\end{equation}

From (\ref{eq:add210-B2D}) and (\ref{eq:add220-B2D}), we obtain

\begin{equation}\label{eq:add230-B2D}
\begin{array}{lll}
\sigma_{ij,1}' & = & \sigma_{ij,11}' + \sigma_{ij,12}' \vspace{0.3cm}\\
               & = & \sigma_{ij} + 6 \frac{s}{s'}(\mu_i(1- \frac{s}{s'}) - \mu_{i,a} + \mu_{i,d})(\mu_j(1-\frac{s}{s'})  - \mu_{j,a} + \mu_{j,d}).
      \end{array}
\end{equation}

Note that $\vec{\mu}'$ and the components $\sigma_{ij,21}'$, $\sigma_{ij,22}'$, $\sigma_{i,31}'$ and 
$\sigma_{i,32}'$ can be computed in
constant time, under the assumption that $X$ and $X'$ differ in 
the constant number of polyhedra, i.e., $n_a$ and $n_d$ are constants.
In that case, also the component 

\begin{equation}\label{eq:add240-B2D}
\begin{array}{lll}
\sigma_{ij}' & = & \frac{1}{6}(\sigma_{ij,11}' + \sigma_{ij,12}' + \sigma_{ij,21}' + \sigma_{ij,22}' +
                   \sigma_{ij,31}' + \sigma_{ij,32}') \vspace{0.3cm}\\
               & = & \frac{1}{6}(\sigma_{ij} + \sigma_{ij,21}' + \sigma_{ij,22}' + \sigma_{ij,31}' + \sigma_{ij,32}) +  \vspace{0.3cm}\\ 
               & & \frac{s}{s'}(\mu_i(1- \frac{s}{s'}) - \mu_{i,a} + \mu_{i,d})(\mu_j(1-\frac{s}{s'})  - \mu_{j,a} + \mu_{j,d})
\end{array}
\end{equation}

can be computed in $O(n_a + n_d)$ time.

\bigskip
\noindent
{\bf {Deleting points}}
\label{subsec:deleting-point20-2D-B2D}
\bigskip

Let the new convex hull be obtained by deleting
$n_d$ tetrahedra from and added $n_a$ tetrahedra to the old convex hull. 
Consequently,
the same formulas and time complexity, as by adding points, follow.

\end{document}